\newcommand{\To}{\Rightarrow} 
\newcommand{\isoto}{\stackrel{\simeq}{\rightarrow}} 
\newcommand{\rpto}{\rightsquigarrow}
\newcommand{\seff}{\approx} 
\newcommand{\cons}{\ensuremath{\lhd}}
\newcommand{\rcons}{\ensuremath{\rhd}}
\newcommand{\lrcons}{\cons\rcons} 
\newcommand{\consup}{\begin{turn}{-90}\ensuremath{\cons}\end{turn}}
\newcommand{\consdown}{\begin{turn}{90}\ensuremath{\cons}\end{turn}}
\newcommand{\Cons}{\ensuremath{\blacktriangleleft}}
\newcommand{\rCons}{\ensuremath{\blacktriangleright}} 
\newcommand{\lrCons}{\Cons\rCons} 
\newcommand{\Consup}{\begin{turn}{-90}\ensuremath{\Cons}\end{turn}}
\newcommand{\Consdown}{\begin{turn}{90}\ensuremath{\Cons}\end{turn}}
\newcommand{\stimes}{\!\times\!} 
\newcommand{\ltimessp}{\ltimes}
\newcommand{\rtimessp}{\rtimes}
\newcommand{\ltimesseq}{\ltimes_{\mathrm{seq}}} 
\newcommand{\rtimesseq}{\rtimes_{\mathrm{seq}}} 
\newcommand{\ltimesfr}{\ltimes_{\mathrm{Fr}}}
\newcommand{\rtimesfr}{\rtimes_{\mathrm{Fr}}}
\newcommand{\ltimeskl}{\ltimes_{\mathrm{Kl}}}
\newcommand{\arr}{\mathtt{arr}}
\newcommand{\first}{\mathtt{first}}
\newcommand{\scond}{\mathtt{second}}
\newcommand{\acomp}{>\!\!>\!\!>}
\newcommand{\altimes}{*\!\!*\!\!*}
\newcommand{\atuple}{\&\!\!\&\!\!\&}
\newcommand{\assoc}{\mathtt{assoc}}
\newcommand{\swap}{\mathtt{swap}}
\newcommand{\fst}{\mathtt{fst}}
\newcommand{\tA}{\mathtt{A}}
\newcommand{\boxEff}{\makebox[130pt]{Cartesian effect categories}}
\newcommand{\boxArr}{\makebox[150pt]{Arrows}}
\newcommand{\tuple}[1]{\langle #1 \rangle} 
\newcommand{\tu}{\langle \, \rangle} 
\newcommand{\cotu}{!} 
\newcommand{\uno}{1} 
\newcommand{\zero}{0}
\newcommand{\cD}[1]{\mathcal{D}_{#1}} 
\newcommand{\olcD}[1]{\overline{\mathcal{D}}_{#1}} 
\newcommand{\cE}{\mathcal{E}} 
\newcommand{\kl}[1]{[#1]}  
\newcommand{\lk}[1]{]#1[}  
\newcommand{\cL}{\mathcal{L}} 
\newcommand{\len}[1]{\mathrm{len}(#1)} 
\newcommand{\cM}{\mathcal{M}_{\mathrm{fin}}} 
\newcommand{\cP}{\mathcal{P}_{\mathrm{fin}}} 
\newcommand{\card}[1]{\mathrm{card}(#1)} 
\newcommand{\res}{\Downarrow}
\newcommand{\ul}[1]{\underline{#1}}
\newcommand{\id}{\mathrm{id}} 
\newcommand{\et}{\;\mbox{ and }\;} 
\def\N{\mathbb N}
\newcommand{\mysep}{\,,\,}  
\newcommand{\ff}{\varphi}
\renewcommand{\gg}{\psi}
\newcommand{\lift}{\dagger}
\newcommand{\ulx}{\underline{x}}
\newcommand{\ppt}[1]{\textit{#1}}
\newcommand{\formula}[1]{#1}
\newcommand{\slogan}[1]{\textsl{#1}}
\newcommand{\exam}[1]{\null \noindent \textit{#1}.}
\renewcommand{\arraystretch}{1.3} 
\theoremstyle{plain}
\newtheorem{thm}{Theorem}
\newtheorem{lem}[thm]{Lemma}
\newtheorem{prop}[thm]{Proposition}
\theoremstyle{definition}
\newtheorem{defn}{Definition}[section]
\theoremstyle{remark}
\newtheorem{rem}{Remark}
\title{Cartesian effect categories are Freyd-categories}
\author{
Jean-Guillaume Dumas 
\thanks{Laboratoire Jean Kuntzmann, Universit\'e de Grenoble, 38041 Grenoble, France 
-- Jean-Guillaume.Dumas@imag.fr -- http://ljk.imag.fr/membres/Jean-Guillaume.Dumas} 
\and Dominique Duval
\thanks{Laboratoire Jean Kuntzmann, Universit\'e de Grenoble, 38041 Grenoble, France 
-- Dominique.Duval@imag.fr -- http://ljk.imag.fr/membres/Dominique.Duval}
\and Jean-Claude Reynaud
\thanks{Malhivert, 38640 Claix, France -- Jean-Claude.Reynaud@imag.fr}
}
\date{June 12., 2009}
\begin{document}

\maketitle 

\begin{abstract}
Most often, in a categorical semantics for a programming language, 
the substitution of terms is expressed by composition and finite products.
However this does not deal with the order of evaluation of arguments,
which may have major consequences when there are side-effects. 
In this paper Cartesian effect categories are introduced for 
solving this issue, and they are compared with 
strong monads, Freyd-categories and Haskell's Arrows.
It is proved that a Cartesian effect category is a Freyd-category 
where the premonoidal structure is provided by a kind of 
binary product, called the sequential product.
The universal property of the sequential product 
provides Cartesian effect categories with a powerful tool for constructions and proofs. 
To our knowledge, both 
effect categories and sequential products are new notions. 

\textbf{Keywords.}
Categorical logic, computational effects, 
monads, Freyd-categories, premonoidal categories, Arrows, 
sequential product, effect categories, Cartesian effect categories.
\end{abstract}

\section{Introduction}


A categorical semantics for a programming language usually 
associates an object to each type, a morphism to each term, 
and uses composition and finite products for dealing with 
the substitution of terms.
This framework behaves very well in a simple equational setting, 
but it has to be adapted as soon as there is some kind of computational effects, 
for instance non-termination or state updating in an imperative language. 
Then there are two kinds of terms: the general terms may cause effects
while the \emph{pure} terms are effect-free.
Following \citep{Moggi91}, a general term may be seen as a \emph{program} 
that returns a \emph{value} which is pure. 
In this paper we focus on the following \emph{sequentiality} issue:
the categorical products do not deal with the order of evaluation of the arguments,
although this order may have major consequences when there are side-effects. 
For solving this sequentiality issue, we introduce \emph{Cartesian effect categories} 
as an alternative for Cartesian categories.

Other approaches include 
strong monads \citep{Moggi89}, Freyd-categories \citep{PowerRobinson97} 
and Arrows \citep{Hughes00}.
These frameworks are quite similar from several points of view 
\citep{HeunenJacobs06,Atkey08},  
while our framework is more precise. 
A first draft for Cartesian effect categories can be found in \citep{DumDuvRey07},
and a similar approach in \citep{DuvRey05}.


A category is called Cartesian if it has finite products,
and a subcategory $C$ of a category $K$ 
is called wide if it has the same objects as $K$. 
A \emph{Freyd-category} is a generalization of a Cartesian category 
that consists essentially in a category $K$ 
with a wide subcategory $C$, 
such that $C$ is Cartesian (hence $C$ is symmetric monoidal) 
and $K$ is symmetric premonoidal.
A \emph{Cartesian effect category}, as defined in this paper,  
is more precise and more homogeneous than a Freyd-category:
like the symmetric monoidal structure on $C$ derives from its product, 
in a Cartesian effect category the symmetric premonoidal structure on $K$
derives from some kind of product, called a \emph{sequential product},
which extends the product of $C$
and generalizes the usual categorical product.
In fact, there are two steps in our definition.
First an \emph{effect category} is defined, 
without mentioning any kind of product:
it is made of a category $K$ with a wide subcategory $C$
and with a relation $\cons$ called \emph{consistency} between morphisms. 
Then a  \emph{Cartesian effect category} is defined
as an effect category with a binary product on $C$ extended by 
a \emph{sequential product} on $K$, 
which itself is defined thanks to 
a universal property that generalizes the categorical product property 
and involves the consistency relation.
Like every universal property, this 
provides a powerful tool for constructions and proofs
in a Cartesian effect category. 


Let us look at two basic examples of effect categories
(two morphisms in a category are called parallel if they 
share the same domain and the same codomain).

The \emph{non-termination} effect involves partial functions.
As usual, two partial functions are called \emph{consistent} 
when they coincide on the intersection of their domains of definition. 
Thus, on the one hand, 
two partial functions $f$ and $f'$ are consistent if and only if
there is a total 
function $v$ such that $v$ is consistent both with $f$ and with $f'$. 
On the other hand, let us say that two partial functions have the
\emph{same effect}
if they have the same domain of definition.
Then clearly,
two partial functions have the same effect and are consistent if and only if 
they are equal.

In an imperative programming language, 
there are side-effects due to the modification of the \emph{state}, 
since the functions in the sense of the programming language,  
in addition to have arguments and a return value, 
are allowed to use the state and to modify it.
A function is called \emph{pure} if it neither use nor modify the state,
and the side-effects are due to the non-pure functions.
Let us say that a function $f$ is \emph{consistent} with a pure function $v$ 
when both return the same value when they are given the same arguments.
Then two arbitrary functions are called \emph{consistent} when 
they are consistent with a common pure function,
which means that both return the same value when they are given the same arguments 
and that in addition this value does not depend on the state.
It should be noted that this consistency relation is not reflexive.
Therefore, if two functions have the same effect and are consistent 
then they are equal,
but the converse is false.


More generally,
an \emph{effect category} is a category $K$ with a wide subcategory $C$ 
and with a consistency relation $\cons$ between parallel morphisms,
the first one in $K$ and the second one in $C$,
satisfying a form of compatibility with the composition. 
The morphisms in $C$ are called \emph{pure} and are denoted with $\rpto$. 
Two morphisms in $K$ are called \emph{consistent} when there is a pure morphism $v$
such that $f\cons v $ and $f'\cons v $;
this is denoted $f\lrcons f'$, and the properties of consistency
are such that the relation $\lrcons$ extends $\cons$.
Let $\uno$ be a terminal object in $C$,
the \emph{effect} of a morphism $f$ is defined 
as the morphism $\cE(f)=\tu_Y\circ f$ where $\tu_Y$
is the unique pure morphism $\tu_Y:Y\rpto \uno$.
It is assumed that the following \emph{complementarity} property holds,
which means that the consistency relation is a kind of ``up-to-effects'' relation:
\slogan{if two morphisms have the same effect and are consistent, 
then they are equal.}

This notion of consistency coincides with the usual one for partial functions,
but to our knowledge it is new in the general setting of computational effects. 
For instance, we will see in section~\ref{subsec:result} that it is fairly different 
from the notion of \emph{having the same result} that  is defined
in \citep{Moggi95} in the framework of evaluation logic. 
Let us look more closely at the complementarity property 
(for some fixed domain and codomain).
On the one hand, to have the same effect is an equivalence relation $\seff$ 
with one distinguished equivalence class, the class 
of the morphisms without effect, which contains all the pure morphisms.
On the other hand, to be consistent is a symmetric relation $\lrcons$, 
with each maximal clique made of a unique pure morphism and all 
the morphisms that are consistent with it. 
The complementarity property asserts that 
there is at most one morphism in the intersection
of a given equivalence class for $\seff$
and a given maximal clique for $\lrcons$. 


A binary product on a category $C$ provides a bifunctor $\stimes$ 
on $C$ such that for all $v_1:X_1\to Y_1$ and $v_2:X_2\to Y_2$, 
the morphism $v_1 \stimes v_2:X_1\stimes X_2 \to Y_1\stimes Y_2$ 
is characterized by the following diagram, 
where the $p_i$'s and $q_i$'s are the projections.
This property is symmetric in $v_1$ and $v_2$.
When $C$ is the category of sets, this means that 
$(v_1 \stimes v_2)(x_1,x_2)=\tuple{v_1(x_1),v_2(x_2)}$.

  $$
  \xymatrix@R=1pc@C=3pc{
  X_1 \ar[r]^{v_1} \ar@{}[rd]|{=} & Y_1 \\
  X_1\times X_2 \ar[r]^{v_1\times v_2} \ar[u]^{p_1} \ar[d]_{p_2} & 
  Y_1\times Y_2 \ar[u]_{q_1} \ar[d]^{q_2} \\
  X_2 \ar[r]^{v_2} & Y_2 \ar@{}[lu]|{=} \\
  }$$


A  \emph{Cartesian effect category} is defined
as an effect category with a binary product on $C$, extended by two symmetric 
\emph{semi-pure products} $v\ltimes f$ and $f\rtimes v$ where $v$ is pure.
The left semi-pure product $v \ltimes f$ is characterized by the following diagram, 
which means that 
$q_1\circ(v \ltimes f)\cons v\circ p_1$ and $q_2\circ(v \ltimes f)=f\circ p_2$ 
(the right semi-pure product is characterized by a symmetric diagram).

  $$
  \xymatrix@R=1pc@C=3pc{
  Y_1 \ar@{~>}[r]^{v} \ar@{}[rd]|{\consdown} & Y_1 \\
  Y_1\times X_2 \ar[r]^{v \ltimes f} \ar@{~>}[u]^{p_1} \ar@{~>}[d]_{p_2} & 
  Y_1\times Y_2 \ar@{~>}[u]_{q_1} \ar@{~>}[d]^{q_2} \\
  X_2 \ar[r]^{f} & Y_2 \ar@{}[lu]|{=} \\
  }$$

This property means that the effect of $v\ltimes f$ is the effect of $f$, 
and that ``up to effects'' $v\ltimes f$ looks like an ordinary binary product. 
Then the left \emph{sequential product} of two arbitrary morphisms
$f_1$ and $f_2$ is easily obtained by composing two semi-pure products: 
$f_1 \ltimes f_2 = (\id_1 \ltimes f_2) \circ (f_1 \rtimes \id_2)$
where $\id_1$ and $\id_2$ denote the identities of $Y_1$ and $X_2$, respectively.
This definition formalizes the notion of \emph{sequentiality}:
``first $f_1$, then $f_2$''. The right  sequential product is defined in a symmetric way. 
We will check that the sequential product extends the semi-pure product, 
so that there is no ambiguity in using the same symbols $\ltimes$ and $\rtimes$ for both.
This approach, to our knowledge, is completely new.
It can be summarized as follows: 
\slogan{while the universal property of a binary product consists in two equalities, 
the universal property of a semi-pure product consists in one equality 
and one consistency.}

For instance, in the category of sets with partial functions, 
$v \ltimes f$ is the partial function such that 
$(v \ltimes f)(x_1,x_2)=\tuple{y_1,y_2}$ 
where $y_1=v(x_1)$ and $y_2=f(x_2)$ whenever $f(x_2)$ is defined,
otherwise $(v \ltimes f)(x_1,x_2)$ is not defined. 
When side-effects are due to the updating of the state,
$v \ltimes f$ is such that for each state $s$, 
$(v \ltimes f)(s,x_1,x_2)=\tuple{s_2,y_1,y_2}$ 
where $\tuple{s,y_1}=v(s,x_1)$ and $\tuple{s_2,y_2}=f(s,x_2)$.

The properties of the sequential product imply 
that a Cartesian effect category is a Freyd-category. 
On the other hand, 
each strong monad defines a Freyd-category \citep{PowerRobinson97}. 
We prove that a Freyd-category defined from a strong monad 
is a weak Cartesian effect category if and only if, roughly speaking: 
\slogan{the strength of the monad is consistent with the identity.} 


Section~\ref{sec:effect} is devoted to effect categories and  
section~\ref{sec:cartesian} to Cartesian effect categories.
Then Cartesian effect categories are related to 
Freyd-categories, Arrows and strong monads in section~\ref{sec:compare}. 
Several examples are considered in sections~\ref{subsec:effect-exam},
\ref{subsec:cartesian-exam} and~\ref{subsec:monad-exam}.

\section{Effect categories}
\label{sec:effect}

\subsection{Pure morphisms}
\label{subsec:pure}

\begin{defn}
\label{defn:pure}
A subcategory $C$ of a category $K$ is \emph{wide} if it has the same objects as $K$;
this is denoted $C\subseteqq K$.
Given $C\subseteqq K$, 
a morphism of $K$ is called \emph{pure} if it is in $C$;
then it is denoted with ``$\rpto$''.
An object $\uno$ is a \emph{pure terminal} object in $C\subseteqq K$ if it is terminal in $C$,
then for each object $X$ the unique pure morphism from $X$ to $\uno$ 
is denoted $\tu_X:X\rpto\uno$. 
\end{defn}

\begin{rem}\textbf{Pure morphisms in a Kleisli category.} 
\label{rem:kl-pure}
Let $C_0$ be a category (called the base category) 
with a monad $(M,\mu,\eta)$ (or simply $M$) 
and let $K_M$ be the Kleisli category of $M$. 
Then $K_M$ has the same objects as $C_0$ and 
for all objects $X$ and $Y$ there is a bijection between $C_0(X,MY)$ and $K_M(X,Y)$.
In this paper, 
for each morphism $f:X\to Y$ in $K_M$ 
the corresponding morphism in $C_0$ is denoted $\kl{f}:X\to MY$, 
and we say that $f$ \emph{stands for} $\kl{f}$,
and for each morphism $\ff:X\to MY$ in $C_0$ 
the corresponding morphism in $K_M$ is denoted $\lk{\ff}:X\to Y$. 
So, $\lk{(\kl{f})}=f$ for every $f$ in $K_M$
and $\kl{(\lk{\ff})}=\ff$ for every $\ff$ in $C_0$ with codomain $MY$ for some $Y$.
Let $J:C_0\to K_M$ denote the functor associated with $M$
and let $C_M=J(C_0)$. Then $J$ is the identity on objects, 
so that $C_M$ is a wide subcategory of $K_M$.
A pure morphism $v:X\rpto Y$ in $K_M$ is a morphism $v=J(v_0)$
for some $v_0:X\to Y$ in $C_0$; this  means that 
$\kl{v}=\eta_Y\circ v_0:X\to MY$ in $C_0$.
Each identity $\id_X$ in $K_M$ henceforth stands for $\kl{\id_X}=\eta_X$ 
and the composition $g\circ f$ of $f:X\to Y$ and $g:Y\to Z$ 
stands for $\kl{g\circ f} = \kl{g}^* \circ \kl{f}$ where $\kl{g}^*=\mu_Z\circ M\kl{g}$.
It follows that when $v:X\rpto Y$ and $w:Y\rpto Z$, then 
$\kl{g\circ v} = \kl{g} \circ v_0$,
$\kl{w\circ f} =  Mw_0 \circ \kl{f}$
and $\kl{w\circ v} = \eta_Z\circ w_0 \circ v_0$.
It should be noted that it does not make sense to say that 
a morphism in $C_0$ is pure or not. 
Indeed, each morphism $\ff:X\to MY$ in $C_0$ gives rise in $K_M$ 
both to a pure morphism $v=J(\ff):X\to MY$
and to a morphism $f=\lk{\ff}:X\to Y$,
related by $\kl{v} = \eta_{MY} \circ \kl{f}$ in $C_0$.
$$ \begin{array}{|c|c|c|c|}
\hline
C_0 &  
\xymatrix@C=3pc{X \ar[r]^{\kl{f}} & MY \\} &
\xymatrix@C=3pc@R=1pc{X \ar[r]^{\kl{v}} \ar[rd]_{v_0} & MY \\ 
   \ar@{}[ru]|(.7){=} & Y \ar[u]_{\eta_Y} \\ } &
\xymatrix@C=3pc@R=1pc{ \ar@{}[rd]|(.7){=} & M^2Y \\ 
   X \ar[r]_{\ff} \ar[ru]^{\kl{J(\ff)}} & MY \ar[u]_{\eta_{MY}} \\ } \\ 
\hline
K_M & 
\xymatrix@C=3pc{X \ar[r]^{f} & Y \\} &
\xymatrix@C=3pc{X\ar@{~>}[r]^{v=J(v_0)} & Y \\} &
\xymatrix@C=3pc@R=1pc{& MY \\ 
   X \ar[r]_{\lk{\ff}} \ar@{~>}[ru]^{J(\ff)} & Y \\ } \\  
\hline
\end{array} $$
In addition, the functor $J:C_0\to K_M$ has a right adjoint, 
which means that for each object $X$ there is an object $X^{\lift}$   
called the \emph{lifting} of $X$, 
with an isomorphism $K_M(X,Y) \cong C_0(X,Y^{\lift})$ natural in $X$ and $Y$.
Let us assume that the \emph{mono requirement} 
is satisfied by the monad, which means that $\eta_X$ is a mono for
every object $X$, or equivalently that the functor $J$ is faithful,
so that it defines an isomorphism from $C_0$ to $C_M$. 
\end{rem}

\subsection{Effects}
\label{subsec:effect}

In this section we define the effect of a morphism $f$
as a  kind of measure of how far $f$ is from being pure:
pure morphisms are effect-free and the effect of $v\circ f$, when $v$ is pure, 
is the same as the effect of $f$. 

\begin{defn}
\label{defn:effect}
Let $K$ be a category with a wide subcategory $C$
and with a pure terminal object $\uno$. 
\emph{The effect} of a morphism $f:X\to Y$ is the morphism 
$\cE(f)=\tu_Y\circ f:X\to\uno$.
We denote $f\seff f'$ when $f:X\to Y$ and $f':X\to Y'$ have the same effect:
  $$ \forall f:X\to Y \mysep \forall f':X\to Y' \mysep
  \formula{ f\seff f' \iff \tu_Y\circ f = \tu_{Y'}\circ f' } \;.$$
A morphism $f:X\to Y$ is \emph{effect-free} if $\cE(f)=\cE(\id_X)$,
which means that $\cE(f)=\tu_X$.
\end{defn}

The following properties are easily derived from the definition.

\begin{prop}
\label{prop:effect}
The same-effect relation $\seff$ 
is an equivalence relation between morphisms with the same domain that satisfies:
  \begin{itemize}
  \item \ppt{Pure morphisms are effect-free}.
  $\forall v:X\rpto Y \mysep  
  \formula{ v\seff \id_X } $.
  \item \ppt{Substitution}.
  $\forall f:X\to Y\mysep \forall g:Y\to Z \mysep \forall g':Y\to Z'\mysep
  \formula{ g\seff g' \implies g\circ f\seff g'\circ f } $.
  \item \ppt{Pure wiping}.  
  $\forall f:X\to Y \mysep \forall w:Y\rpto Z \mysep 
  \formula{ w\circ f\seff f } $.
  \end{itemize}
\end{prop}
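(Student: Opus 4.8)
The plan is to verify each of the three bulleted properties directly from Definition~\ref{defn:effect}, since in every case the relevant equality is an equality of morphisms into the pure terminal object $\uno$, and such equalities can be manipulated using only associativity of composition and the uniqueness of pure morphisms into $\uno$. Before that, I would quickly record that $\seff$ is an equivalence relation on morphisms sharing a domain: reflexivity, symmetry and transitivity are immediate because $f\seff f'$ is literally the equation $\tu_Y\circ f=\tu_{Y'}\circ f'$ between morphisms $X\to\uno$, so $\seff$ is the pullback of the identity relation on $K(X,\uno)$ along $f\mapsto\cE(f)$.

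For \textbf{pure morphisms are effect-free}, given $v:X\rpto Y$ I would observe that both $\tu_Y\circ v$ and $\tu_X$ are pure morphisms from $X$ to $\uno$ (the first because $C$ is a subcategory, so the composite of pure morphisms is pure), and $\uno$ is terminal in $C$, so they coincide; hence $\cE(v)=\tu_X=\cE(\id_X)$, i.e. $v\seff\id_X$. For \textbf{substitution}, suppose $g\seff g'$, i.e. $\tu_Z\circ g=\tu_{Z'}\circ g'$; precomposing both sides with $f:X\to Y$ and using associativity gives $\tu_Z\circ(g\circ f)=\tu_{Z'}\circ(g'\circ f)$, which is exactly $g\circ f\seff g'\circ f$. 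For \textbf{pure wiping}, given $f:X\to Y$ and $w:Y\rpto Z$, I need $\tu_Z\circ(w\circ f)=\tu_Y\circ f$; by associativity this reduces to showing $\tu_Z\circ w=\tu_Y$, and both sides are pure morphisms $Y\rpto\uno$, so they agree by terminality of $\uno$ in $C$. One could also derive pure wiping as a corollary: by the first property $w\seff\id_Y$, hence $\cE(w\circ f)=\tu_Z\circ w\circ f=\tu_Y\circ\id_Y\circ f=\tu_Y\circ f=\cE(f)$, using that $w\seff\id_Y$ means $\tu_Z\circ w=\tu_Y$.

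There is essentially no obstacle here; the only point requiring the slightest care is remembering that the witnessing equalities live in the hom-set $K(-,\uno)$ and that uniqueness is available precisely because $\uno$ is \emph{pure} terminal (terminal in $C$, not necessarily in $K$), so one must check that the relevant composites are pure before invoking uniqueness — which they are, being composites of pure morphisms. I would present the three items as a short displayed chain of equalities each, and note that substitution does not even use purity, only associativity, whereas the other two rest on terminality of $\uno$ in $C$.
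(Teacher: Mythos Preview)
Your proposal is correct and is exactly the direct verification the paper has in mind: the paper does not write out a proof of this proposition at all, merely remarking that ``the following properties are easily derived from the definition''. Your unpacking of each item via the terminality of $\uno$ in $C$ (for the first and third) and bare associativity (for the second) is the intended argument, and your caution that uniqueness only applies to \emph{pure} morphisms into $\uno$ is the one point worth making explicit.
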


\begin{rem}\textbf{Effects in a Kleisli category.} 
\label{rem:kl-effect}
Within the same framework as in remark~\ref{rem:kl-pure},
let us assume that there is a terminal object $\uno$  in $C_0$,
or equivalently in $C_M$. 
For each object $X$, the pure morphism $\tu_X:X\rpto\uno$ 
stands for $\kl{\tu_X}=\eta_\uno\circ \tu_X:X\to M\uno$ in $C_0$,
and for each morphism $f:X\to Y$ in $K_M$ 
the effect $\cE(f)$ of $f$ stands for 
$\kl{\tu_Y\circ f}=M\tu_Y \circ \kl{f}:X\to M\uno$ in $C_0$.
Let $\seff_0$ denote the relation between morphisms in $C_0$
defined by $\kl{f} \seff_0 \kl{f'}$ if and only if $f\seff f'$. Then in $C_0$: 
  $$ \forall \ff:X\to MY\mysep \forall \ff':X\to MY' \mysep 
     \formula{ \ff \seff_0 \ff' \iff M\tu_Y \circ \ff = M\tu_{Y'} \circ \ff' } \,.$$
\end{rem}

\subsection{Consistency} 
\label{subsec:value}

Now we define a consistency relation between two parallel morphisms.

\begin{defn}
\label{defn:purecons}
Let $K$ be a category with a wide subcategory $C$.
A \emph{consistency} relation $\cons$ is a relation between parallel morphisms, 
the second one being pure, which satisfies:
\begin{itemize}
\item \ppt{Pure reflexivity}.
  $ \forall v:X\rpto Y  \mysep
  \formula{ v\cons v } $.
  \item \ppt{Compatibility with composition}.   
  $ \forall f:X\to Y \mysep \forall g:Y\to Z \mysep 
  \forall u:Y\rpto Y' \mysep \forall v:X\rpto Y'  \mysep \forall w:Y'\rpto Z \mysep
  \\ \formula{ (u\circ f\cons v) \,\wedge\, (g \cons w \circ u) \implies g\circ f \cons w\circ v } $.
  $$ \xymatrix@C=3pc@R=1pc{
  X \ar@/_/[rd]_{f} \ar@{~>}[r]^{v}_{\consdown} & 
     Y' \ar@{~>}[r]^{w}_{\consdown} & Z \\ 
   & Y \ar@/_/[ru]_{g} \ar@{~>}[u]^(.4){u} & \\  } 
 \implies
  \xymatrix@C=3pc@R=1pc{ 
  X \ar@/^-2ex/[rr]_{g\circ f} \ar@{~>}@<1ex>[rr]^(.4){w\circ v}_{\consdown} & & Z \\ 
   }  $$
\end{itemize}
Two parallel morphisms $f$ and $f'$ are called \emph{consistent} 
when $f\cons v \rcons f'$ for some pure morphism $v$,
this is denoted $f\lrcons f'$. 
\end{defn}

The following properties are easily derived from the definition.

\begin{prop}
\label{prop:purecons}
Let $K$ be a category with a wide subcategory $C$
and with a consistency relation $\cons$.
Then: 
\begin{itemize} 
  \item \ppt{Preservation by composition}. 
  $ \forall f:X\to Y \mysep \forall v:X\rpto Y \mysep \forall g:Y\to Z  \mysep \forall w:Y\rpto Z \mysep
  \\ \formula{ (f \cons v) \wedge (g \cons w) \implies g\circ f \cons w\circ v } $.
  $$ \xymatrix@C=3pc{
  X \ar@/_3ex/[r]_{f} \ar@{~>}[r]^{v}_{\consdown} & 
    Y \ar@/_3ex/[r]_{g} \ar@{~>}[r]^{w}_{\consdown} & Z \\ 
  } \implies \xymatrix@C=3pc{
  X \ar@/_2ex/[rr]_{g\circ f} \ar@{~>}@<1ex>[rr]^(.4){w\circ v}_{\consdown} & & Z \\   }$$
  \item \ppt{Pure substitution}.  
  $ \forall v:X\rpto Y \mysep \forall  g:Y\to Z  \mysep \forall w:Y\rpto Z \mysep
  \formula{ g \cons w \implies g\circ v \cons w\circ v } $.
  \item \ppt{Pure replacement}.
  $ \forall f:X\to Y \mysep \forall v:X\rpto Y\mysep \forall w:Y\rpto Z \mysep
  \formula{ f\cons v  \implies w\circ f \cons w\circ v } $.
\end{itemize}
\end{prop}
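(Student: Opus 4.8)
The plan is to derive all three properties from the two axioms of a consistency relation in Definition~\ref{defn:purecons}, namely \ppt{pure reflexivity} and \ppt{compatibility with composition}. The key observation is that \ppt{preservation by composition} does the real work, and the other two statements are then immediate special cases obtained by taking one of the two morphisms to be pure and invoking \ppt{pure reflexivity}.

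First I would prove \ppt{preservation by composition}. Given $f:X\to Y$, $v:X\rpto Y$, $g:Y\to Z$, $w:Y\rpto Z$ with $f\cons v$ and $g\cons w$, I would instantiate the \ppt{compatibility with composition} axiom with $u:=\id_Y:Y\rpto Y$ (which is pure, since $C$ is a wide subcategory and hence contains every identity), $Y':=Y$, and the given $v$ and $w$. The two hypotheses of that axiom then read $\id_Y\circ f\cons v$, i.e.\ $f\cons v$, and $g\cons w\circ\id_Y$, i.e.\ $g\cons w$ — both of which hold by assumption. Its conclusion is exactly $g\circ f\cons w\circ v$.

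Next, \ppt{pure substitution} is the case of \ppt{preservation by composition} in which the upper morphism is pure: given $v:X\rpto Y$, $g:Y\to Z$, $w:Y\rpto Z$ with $g\cons w$, apply \ppt{preservation by composition} to the pair $f:=v$ (with $v\cons v$ by \ppt{pure reflexivity}) and $g\cons w$, obtaining $g\circ v\cons w\circ v$. Symmetrically, \ppt{pure replacement} is the case in which the lower morphism is pure: given $f:X\to Y$, $v:X\rpto Y$, $w:Y\rpto Z$ with $f\cons v$, apply \ppt{preservation by composition} to $f\cons v$ and the pair $g:=w$ (with $w\cons w$ by \ppt{pure reflexivity}), obtaining $w\circ f\cons w\circ v$.

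I do not expect any genuine obstacle. The only point needing a moment's attention is verifying that the degenerate instantiations $u=\id_Y$, respectively $f=v$ and $g=w$, are legitimate: that identities lie in $C$, that $\id_Y\circ f=f$ and $w\circ\id_Y=w$, and that \ppt{pure reflexivity} indeed supplies $v\cons v$ and $w\cons w$. Once \ppt{preservation by composition} is established, the remaining two statements follow in one line each, with no further use of the axioms beyond \ppt{pure reflexivity}.
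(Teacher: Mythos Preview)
Your proof is correct and is exactly the kind of routine derivation the paper has in mind when it says these properties ``are easily derived from the definition'' without giving an explicit argument. The instantiation $u:=\id_Y$ in the compatibility axiom to get preservation by composition, followed by specializing one of the two consistencies via pure reflexivity to obtain pure substitution and pure replacement, is the natural route and there is nothing to add.
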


\begin{defn}
\label{defn:value}
An \emph{effect category} $(C\subseteqq K,\cons)$
is made of a category $K$ and a wide subcategory $C$ of $K$, 
with a pure terminal object $\uno$ 
and the same-effect relation $\seff$ as in definition~\ref{defn:effect},
together with a consistency relation $\cons$ which satisfies:
\begin{itemize}
  \item \ppt{Complementarity with $\seff$}.
  $ \forall f,f':X\to Y \mysep 
  \formula{ (f\seff f') \,\wedge\, (f\lrcons f') \implies f=f' } $.
\end{itemize}
\end{defn}

In essence, the complementarity property can be stated as follows:
\slogan{if two morphisms have the same effect and are consistent, 
then they are equal.}

The following properties are easily derived.

\begin{prop}
\label{prop:value}
Let $(C\subseteqq K,\cons)$ be an effect category. 
Then: 
\begin{itemize} 
  \item \ppt{Consistency on effects}. 
  $ \forall f:X\to Y  \mysep
  \formula{ (\exists v \mysep f\cons v) \implies \cE(f) \cons \tu_X  }$.
  \item \ppt{Consistency on pure morphisms}. 
  $ \forall v,v':X\rpto Y  \mysep
  \formula{ v\cons v' \iff v=v' } $.
  \item \ppt{Consistency is unambiguous}.
  $ \forall f:X\to Y \mysep \forall v:X\rpto Y  \mysep
  \formula{ f\lrcons v \iff f\cons v } $.
\end{itemize}
\end{prop}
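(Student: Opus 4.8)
The plan is to establish the three statements in the order listed, since the third one depends on the second; throughout I would argue from the earlier propositions rather than unfolding the definition of $\cons$. For \ppt{Consistency on effects}, the crucial remark is that uniqueness of pure morphisms into the pure terminal object $\uno$ gives $\tu_X = \tu_Y\circ v$ whenever $v:X\rpto Y$ is pure. So, starting from a witness $f\cons v$, I would apply \ppt{Pure replacement} from Proposition~\ref{prop:purecons} with $w = \tu_Y:Y\rpto\uno$, obtaining $\tu_Y\circ f \cons \tu_Y\circ v$, which is exactly $\cE(f)\cons\tu_X$.

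For \ppt{Consistency on pure morphisms}, the implication $v=v'\implies v\cons v'$ is immediate from \ppt{Pure reflexivity}. For the converse I would invoke \ppt{Complementarity with $\seff$} from Definition~\ref{defn:value}: since pure morphisms are effect-free (Proposition~\ref{prop:effect}) we have $v\seff v'$; and since $v\cons v'$ is given while $v'\cons v'$ holds by \ppt{Pure reflexivity}, we have $v\lrcons v'$. Complementarity then forces $v=v'$. Recognizing that complementarity is precisely the tool needed here is the only non-mechanical point in the whole argument.

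For \ppt{Consistency is unambiguous}, the direction $f\cons v\implies f\lrcons v$ follows by taking the pure witness to be $v$ itself, using \ppt{Pure reflexivity} to get $v\cons v$. Conversely, if $f\lrcons v$ there is a pure $u$ with $f\cons u$ and $v\cons u$; applying the part \ppt{Consistency on pure morphisms} just proved to $v\cons u$ yields $v=u$, whence $f\cons v$. I expect no genuine obstacle: the chief subtlety is simply the ordering of the arguments (the unambiguity claim must come after the characterization of consistency between pure morphisms) together with the use of the terminal object to rewrite $\tu_X$ as $\tu_Y\circ v$.
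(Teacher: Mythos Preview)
Your proposal is correct and follows exactly the natural route the paper has in mind; indeed the paper does not spell out a proof at all, merely stating that ``the following properties are easily derived from the definition,'' and your argument---using \ppt{Pure replacement} together with terminality of $\uno$ for the first item, \ppt{Complementarity} plus effect-freeness of pure morphisms for the second, and reducing the third to the second---is precisely the intended derivation.
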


\begin{rem}
It follows that a pure morphism $v$ is consistent with itself 
and with no other pure morphism.
In general a morphism $f$ may be consistent 
with no pure morphism or with several ones.
The relation $\lrcons$ is symmetric but in general it is not reflexive.
\end{rem}

\begin{rem}
Let $K$ be a category with a wide subcategory $C$
and with a pure terminal object $\uno$. 
Then the same-effect relation $\seff$ is uniquely defined,
and there is a ``trivial'' consistency relation: the equality of pure morphisms. 
But neither the existence nor the unicity of a non-trivial 
consistency relation $\cons$ is guaranteed.
\end{rem}

\subsection{Extended consistency}
\label{subsec:effect-extended}

The consistency $\cons$ is a relation between two morphisms, 
the second one being pure.
It can be extended to pairs of arbitrary morphisms.

\begin{defn}
\label{defn:extended}
In an effect category $(C\subseteqq K, \cons)$, 
\emph{an extended consistency}  is a relation~$\Cons$ between parallel morphisms
such that:
\begin{itemize}
\item  \ppt{Extension}.  
  $ \forall f:X\to Y \mysep \forall v:X\rpto Y \mysep 
  \formula{ f \cons v \implies f\Cons v  } $. 
\item  \ppt{Substitution}. 
  $ \forall f:X\to Y \mysep \forall g,g':Y\to Z  \mysep
  \formula{ g \Cons g' \implies g\circ f \Cons g'\circ f } $.
\end{itemize}
The symmetric relation $\lrCons$ is defined by 
$f\lrCons f'$ if and only if there is a morphism $f''$ such that $f\Cons f'' \rCons f'$.
This relation $\lrCons$ is weaker than the relation $\lrcons$.
\end{defn}

It follows easily that $\Cons$ is reflexive and that $f \Cons f'$ implies $f\lrCons f'$. 

\begin{rem}
\label{rem:extended}
It is easy to check that 
in an effect category $(C\subseteqq K, \cons)$ 
there is a smallest extended consistency $\Cons$, which is defined as follows:
$\forall h,h':X\to Y \mysep$
  $$  h \Cons h' \iff \exists  f:X\to Y \mysep \exists g:Y\to Z \mysep 
  \exists w:Y\rpto Z \mysep 
  \formula{ (h=g\circ f)  \,\wedge\, (h'=w\circ f) \,\wedge\, (g \cons w )  }  $$
  $$ \xymatrix@C=3pc{
  X \ar[r]_{f} & Y \ar@/_3ex/[r]_{g} \ar@{~>}[r]^{w}_{\consdown} & Z \\ 
  }  \iff 
  \xymatrix@C=3pc{
  X \ar@/_2ex/[rr]_{g\circ f} \ar@{~>}@<1ex>[rr]^{w\circ f}_{\Consdown} & & Z \\ 
  }  $$
In addition, this relation $\Cons$ satisfies pure replacement:
\\  $ \forall f,f':X\to Y \mysep \forall w:Y\rpto Z  \mysep
  \formula{ f \Cons f' \implies w\circ f \Cons w\circ f' } $.
\end{rem}

\subsection{Examples of effect categories} 
\label{subsec:effect-exam}

Several examples are introduced in this section.
For each example, the same-effect relation $\seff$ is described,
then a consistency relation $\cons$ is chosen
in such a way that we get an effect category,
and the smallest extended consistency relation $\Cons$ is described.  
It will be checked in sections~\ref{subsec:cartesian-exam} 
and~\ref{subsec:monad-exam} that in each example 
the chosen consistency relation gives rise to a Cartesian effect category.
The examples about errors, lists, finite multisets and finite sets 
are provided directly by a monad $M$, then $K_M$ and $C_M$ 
are defined as in remark~\ref{rem:kl-pure}.
States could be treated with monads, at the cost of using an extra adjunction,
but this would not be possible for partiality over an arbitrary base category.

\exam{Errors} 
Let $C_0$ be a category 
with an initial object $\zero$ and 
with a distinguished object $E$ (for ``errors''),
hence with a unique morphism $\cotu_E:\zero\to E$.
Let us assume that there are coproducts of the form $X+E$ 
that \emph{behave well} in the sense of \emph{extensivity} \citep{CarLacWal93}: 
for every $\ff:X\to Y+E$, there is a coproduct $X=\cD{\ff}+\olcD{\ff}$ 
with two morphisms $\ff_Y:\cD{\ff}\to Y$ and $\ff_E:\olcD{\ff}\to E$ 
such that $\ff=\ff_Y+\ff_E$.
The \emph{error monad} on $C_0$ has $MX=X+E$ as endofunctor 
and the coprojection $\eta_X:X\to X+E$ as unit. 
A morphism $f:X\to Y$ in the Kleisli category $K_M$ stands for a morphism 
$\kl{f}:X\to Y+E$ in $C_0$, such that $\kl{f}=\kl{f}_Y+\kl{f}_E$ as explained above.
A pure morphism $v=J(v_0):X\rpto Y$ in $K_M$ stands for 
$\kl{v}=\eta_Y\circ v_0 : X\to Y+E$ in $C_0$, 
such that $\kl{v}=v_0+\cotu_E : X\to Y+E$ in $C_0$.
Let us assume that $C_0$ has a terminal object $\uno$.
For each morphism $f:X\to Y$ in $K_M$, 
the effect $\cE(f)=\tu_Y\circ f:X\to \uno$ is such that 
$\kl{\cE(f)} = (\tu_Y+\id_E)\circ \kl{f} = \tu_{\cD{\kl{f}}} + \kl{f}_E$. 
All this can be illustrated as follows in $C_0$, first for a pure morphism $v$
then for a morphism $f$ and finally for the effect $\cE(f)$; 
the vertical arrows are the coprojections:
$$ \xymatrix@C=3.5pc@R=1pc{
X \ar[r]^{v_0} \ar[d]_{\id_X} & Y\ar[d]  \\
X \ar[r]^{\kl{v}} \ar@{}[ru]|{=} \ar@{}[rd]|{=} & Y+E  \\
\zero \ar[r]^{\cotu_E} \ar[u]^{\cotu_X} & E \ar[u] \\
}\qquad
 \xymatrix@C=3.5pc@R=1pc{
\cD{\kl{f}} \ar[r]^{\kl{f}_Y} \ar[d] & Y \ar[d]  \\
X \ar[r]^{\kl{f}} \ar@{}[ru]|{=} \ar@{}[rd]|{=}  & 
  Y+E  \\
\olcD{\kl{f}} \ar[r]^{\kl{f}_E} \ar[u] & E \ar[u]  \\
}\qquad
 \xymatrix@C=3.5pc@R=1pc{
\cD{\kl{f}} \ar[r]^{\kl{f}_Y} \ar[d] \ar@/^4ex/[rr]^{\tu_{\cD{\kl{f}}}}_{=} & 
  Y \ar[r]^{\tu_Y} & \uno \ar[d] \\
X \ar[rr]^{\kl{\cE(f)}} \ar@{}[rru]|{=} \ar@{}[rrd]|{=} & 
   & \uno+E \\
\olcD{\kl{f}} \ar[r]^{\kl{f}_E} \ar[u] \ar@/_4ex/[rr]_{\kl{f}_E}^{=} & 
  E \ar[r]^{\id_E}  & E \ar[u] \\
}$$
Let $i_{\kl{f}}:\cD{\kl{f}} \to X$ denote the coprojection and 
let $\isoto$ denote an isomorphism in $C_0$.
\begin{itemize}
\item $ \forall f:X\to Y \mysep \forall f':X\to Y' \mysep 
  \formula{ f \seff f' \iff \exists i:\olcD{\kl{f}} \isoto \olcD{\kl{f'}} \mysep 
   \kl{f}_E = \kl{f'}_E \circ i } $. 
\item $ \forall f:X\to Y \mysep \forall v=J(v_0):X\rpto Y  \mysep 
  \formula{ f \cons v \iff  \kl{f}_Y = v_0 \circ i_{\kl{f}} } $.
\end{itemize}

When $C_0$ is the category of sets, 
we say that $\cD{\ff}$ is \emph{the domain of definition of $\ff$}
and that $\ff$ \emph{raises the error $e$ at $x$} whenever $\ff(x)=e\in E$,
so that a morphism $v$ is pure if and only if $\kl{v}$ does not raise any error.
Then, $f \seff f'$ means that $\kl{f}$ and $\kl{f'}$ 
raise the same errors for the same arguments,
hence they have the same domain of definition.
Furthermore, $f \cons v$ means that $\kl{f}$ coincides with $\kl{v}$ on $\cD{\kl{f}}$, 
hence $f \lrcons f'$ means that $\kl{f}$ and $\kl{f'}$ 
coincide on $\cD{\kl{f}} \cap \cD{\kl{f'}}$.
Then the smallest extended consistency relation is such that for all $f,f':X\to Y$, 
$f \Cons f'$ if and only if $\cD{\kl{f}} \subseteq \cD{\kl{f'}}$
and $\kl{f}$ coincides with $\kl{f'}$ on $\cD{\kl{f}}$ and also on $\olcD{\kl{f'}}$.
It follows that $\Cons$ is transitive
and that $\lrCons$ is the same relation as $\lrcons$. 

\exam{Partiality} 
A \emph{category of partial morphisms} is defined here, as in \citep{CurObt89},
as a category $K$ with a wide subcategory $C$ such that 
the category $K$ is enriched with a partial order $\leq$
and every pure arrow is maximal for $\leq$.
Then the morphisms in $K$ are called the \emph{partial functions} 
and the morphisms in $C$ the \emph{total functions}, 
as in the fundamental situation of sets.
In addition, let us assume that there is a pure terminal object $\uno$,
and wherefore the effect of a morphism $f:X\to Y$ is the morphism
$\tu_Y \circ f$ (in \citep{CurObt89} this morphism is called the \emph{domain of
  definition of $f$}). 
\begin{itemize}
\item $ \forall f:X\to Y \mysep \forall f':X\to Y' \mysep 
  \formula{ f \seff f'  \iff \tu_Y \circ f =\tu_Y \circ f' } $.
\item $ \forall f:X\to Y \mysep \forall v=J(v_0):X\rpto Y  \mysep 
  \formula{ f \cons v \iff f\leq v } $.
\item $ \forall f,f':X\to Y \mysep 
  \formula{ f \Cons f' \iff f\leq f' } $. 
\end{itemize} 
We add, as a new axiom, the complementarity of $\seff$ and $\cons$. 

On sets, with the usual notion of partial function, 
the inclusion of $C$ in $K$ has a right adjoint with lifting $X^{\lift}=X+\uno$, 
so that the partial functions from $X$ to $Y$
can be identified to the (total) functions from $X$ to $Y+\uno$
and the partial order $\leq$ corresponds to the inclusion 
of the domains of definition (in their usual sense, as subsets).
Then both points of view (partiality and error) are equivalent.

\exam{State}
Let $C_0$ be a category 
with a distinguished object $S$ (for ``states'')
and with products of the form $S\times X$. 
For each set $X$ let $\sigma_X:S\times X\to S$ and $\pi_X:S\times X\to X$ 
denote the projections.
Let $K$ be the category 
with the the same objects as $C_0$ and with a morphism $f:X\to Y$ 
for each $\kl{f}:S\times X\to S\times Y$ in $C_0$;
we say that $f$ in $K$ \emph{stands for} $\kl{f}$ in $C_0$.
Let $C$ be the wide subcategory of $K$ 
with the pure morphisms $v=J(v_0):X\rpto Y$ standing 
for $\kl{v}=\id_S\times v_0:S\times X\to S\times Y$.
Let us assume that $C_0$ has a terminal object $\uno$.
We may identify $S\times\uno$ with $S$, 
so that the morphism $\tu_X:X\rpto \uno$ stands for the projection $\sigma_X:S\times X\to S$
and the effect of a morphism $f:X\to Y$ stands for $\sigma_Y\circ \kl{f}:S\times X\to S$. 
\begin{itemize}
\item $ \forall f:X\to Y \mysep \forall f':X\to Y' \mysep 
  \formula{ f \seff f' \iff  \sigma_Y\circ \kl{f} = \sigma_{Y'}\circ \kl{f'} } $.
\item $ \forall f:X\to Y \mysep \forall v=J(v_0):X\rpto Y  \mysep 
  \formula{ f \cons v \iff \pi_Y\circ \kl{f} = v_0 \circ \pi_X } $.
\item $ \forall f,f':X\to Y  \mysep 
  \formula{ f \Cons f' \iff \pi_Y\circ \kl{f} = \pi_Y\circ \kl{f'} } $.
\end{itemize}
It follows that $\Cons$ is an equivalence relation, so that $\lrCons$ 
is the same as $\Cons$. 

On sets, $f \seff f'$ means that $\kl{f}$ and $\kl{f'}$ modify the state in the same way,
and $f \cons v$ means that $\kl{f}$ always returns the same value as $v_0$,
so that $f \lrcons f'$ means that $\kl{f}$ and $\kl{f'}$ 
both always return the same value, which in addition does not depend on the state,
while $f \Cons f'$ (as well as $f \lrCons f'$) means that $\kl{f}$ and $\kl{f'}$ 
both always return the same value, which may depend on the state.

\exam{Lists} 
Let us consider the \emph{list monad} with endofunctor $\cL$ 
on the category of sets.
The unit $\eta$ maps each $x$ to $(x)$ 
and the multiplication $\mu$ flattens each list of lists.
Since $\uno$ is a singleton, 
a list $\ell$ in $\cL(\uno)$ may be identified to its length $\len(\ell)$ in $\N$,
and the effect of a morphism $f:X\to Y$ to $\len\circ f:X\to \N$.
Then, a morphism $f$ is effect-free when $\len\circ f$ is the constant function~1.
For each $x\in X$ and $k\in\N$, we denote by $(x)^k$ the list 
$(x,\dots,x)$ where $x$ is repeated $k$ times. 
More generally, for each list $\ul{x}=(x_1,\dots,x_n)\in\cL(X)$ 
and each list of naturals $\ul{k}=(k_1,\dots,k_n)$ with the same length as $\ul{x}$,
we denote by $\ul{x}^{\ul{k}}$ the list 
$(x_1,\dots,x_1,\dots,x_n,\dots,x_n)$ where each $x_i$ is repeated $k_i$ times. 
\begin{itemize}
\item $ \forall f:X\to Y \mysep \forall f':X\to Y' \mysep 
  \formula{ f \seff f' \iff \forall x\in X \mysep \len(f(x)) =\len(f'(x)) }$.
\item $ \forall f:X\to Y \mysep \forall v=J(v_0):X\rpto Y  \mysep 
  \formula{ f \cons v \iff  \forall x\in X \mysep \exists k\in\N \mysep \kl{f}(x)=(v_0(x))^k\, } $. 
\item $ \forall f,f':X\to Y \mysep 
  \formula{ f \Cons f' \iff  \forall x\in X \mysep \exists \ul{k}\in\cL(\N) \mysep 
  \kl{f}(x)=\kl{f'}(x)^{\ul{k}} } $. 
\end{itemize}
It follows that $f \lrcons f'$ if and only if for each $x \in X$
there is some $y\in Y$ that is the unique element (if any) in the lists $\kl{f}(x)$ and $\kl{f'}(x)$,
and that $f \lrCons f'$ as soon as $f$ and $f'$ are parallel. 

\exam{Finite (multi)sets} 
The example of lists can easily be adapted to the \emph{finite multiset monad} 
and to the \emph{finite set monad} on the category of sets. 
For the finite multiset monad, $\cM(\uno)$ can be identified to $\N$
and the effect of a morphism to the cardinal of its image.
\begin{itemize}
\item $ \forall f:X\to Y \mysep \forall f':X\to Y' \mysep 
  \formula{ f \seff f' \iff \forall x\in X \mysep \card(f(x))=\card(f'(x)) } $.
\item $ \forall f:X\to Y \mysep \forall v=J(v_0):X\rpto Y  \mysep 
  \formula{ f \cons v \iff  
  \forall x\in X \mysep \kl{f}(x) \subseteq \{v_0(x)\} \, } $. 
\item $ \forall f,f':X\to Y \mysep 
  \formula{ f \Cons f' \iff  
  \forall x\in X \mysep \kl{f}(x) \subseteq \kl{f'}(x) \, } $. 
\end{itemize}
For the finite set monad, the definitions of $\cons$ and $\Cons$ are similar,
but $\seff$ is different. Since $\cP(\uno)$ has only
two elements $\emptyset$ and $\uno$,
we get $f \seff f'$ if and only if for all $x\in X$
either both $f(x)$ and $f'(x)$ are empty or both are non-empty.

\subsection{Results in evaluation logic} 
\label{subsec:result}

In \citep{Moggi95}, within the framework of \emph{evaluation logic} and
with respect to a strong monad satisfying some extra properties,
Moggi defines the relation $c\res a$, which means that  
the value $a$ is a \emph{result} of the computation $c$.
With the same notations as in remark~\ref{rem:kl-pure}, 
$c: \uno \to MX$ and $a: \uno \to X$ are morphisms in $C_0$, 
or equivalently $c=\kl{f}$ for a morphism $f:\uno\to X$ in $K_M$  
and $a=v_0:X\to Y$ yields a pure morphism $v=J(v_0):\uno\rpto X$.
Then it may happen that $f$ is \emph{consistent} with $v$ in the sense of this paper.
The following table compares both notions for several monads on sets. 
$$ \begin{array}{|c|c|c|}
\hline
\makebox[1.5cm]{Monad} & 
  \makebox[5.5cm]{Results \citep{Moggi95}} & 
  \makebox[5.5cm]{Consistency (this paper)} \\
MY    & c\res a  & f\cons v   \\
\hline
Y+E & 
c=a\; (\mbox{thus, c is total}) &
c\in Y \implies c=a \\
\hline
(Y\times S)^S &
\exists s\in S  \mysep \exists s'\in S \mysep c(s)=(a,s') &
\forall s\in S  \mysep \exists s'\in S \mysep c(s)=(a,s')  \\
\hline
\cL(Y) & 
a \in c &
\exists k\in\N  \mysep c = (a)^k \\
\hline
\cP(Y) &
a \in c &
c = \{a\} \mbox{ or } c = \emptyset \\  
\hline
\end{array}$$
From this table we see that
in general $f\cons v \not\Rightarrow c\res a$  and $c\res a  \not\Rightarrow  f\cons v$. 
It can easily be seen from the example of the state monad
that having the same results is not a consistency relation in general,
since two different morphisms may have the same effect and the same results.
Therefore, the notion of result in evaluation logic does not easily fit with 
our notion of consistency.

\section{Cartesian effect categories} 
\label{sec:cartesian}

\subsection{Cartesian categories}
\label{subsec:binary}

In this paper a \emph{Cartesian category} is a category with chosen finite products.
We denote by $\uno$ the terminal object, 
$\times$ for the products and $p,q,r,s,t,\dots$ (with indices) for the projections.
The binary product defines a functor $\times:C^2\to C$ such that 
for all $v_1:X_1\to Y_1$ and $v_2:X_2\to Y_2$, 
the morphism $v_1 \times v_2:X_1\times X_2 \to Y_1\times Y_2$ 
is the unique morphism that satisfies the \emph{binary product property}:
  $$ \begin{array}{cc}
   \xymatrix@R=.5pc{ \\ 
   \txt{ $q_1 \circ (v_1 \times v_2) = v_1\circ p_1$} \\ 
   \txt{ $q_2 \circ (v_1 \times v_2) = v_2\circ p_2$ } \\  }
  \qquad 
  & 
  \xymatrix@R=1.5pc@C=3pc{
  X_1 \ar[r]^{v_1} \ar@{}[rd]|{=} & Y_1 \\
  X_1\times X_2 \ar[r]^{v_1\times v_2} \ar[u]^{p_1} \ar[d]_{p_2} & 
  Y_1\times Y_2 \ar[u]_{q_1} \ar[d]^{q_2} \\
  X_2 \ar[r]^{v_2} & Y_2 \ar@{}[lu]|{=} \\
  } \\
  \end{array} $$

In a Cartesian category $C$, the \emph{swap} natural transformation $c$, 
with components $c_{X_1,X_2}:X_1\times X_2 \to X_2\times X_1$, 
is defined from the projections 
$p_i:X_1\times X_2 \to X_i$ and $p'_i:X_2\times X_1 \to X_i$ 
by $p'_i\circ c_{X_1,X_2} = p_i$ for $i=1,2$.
It follows that $c_{X_2,X_1}=c_{X_1,X_2}^{-1}$.

Now, Cartesian products in a category 
are generalized, first as semi-pure products, then as sequential products, 
in an effect category. 

\subsection{Semi-pure products} 
\label{subsec:semipure}

Let us consider an effect category $(C\subseteqq K,\cons)$ 
where $C$ is a Cartesian category. 
We define the \emph{semi-pure products} 
as two graph homomorphisms $\ltimessp:C\times K\to K$ 
and $\rtimessp:K\times C\to K$
that extend $\times$ and that 
satisfy some generalization of the binary product property 
involving the consistency relation $\cons$.
\slogan{while the universal property of a binary product consists in two equalities, 
the universal property of a semi-pure product consists in one equality and one consistency.}

\begin{defn}
\label{defn:semipure}
Let $(C\subseteqq K,\cons)$ be an effect category 
with a binary product $\times$ on $C$. 
A graph homomorphism $\ltimessp:C\times K\to K$
is \emph{the left semi-pure product} on $(C\subseteqq K,\cons,\times)$ 
if it extends $\times$ and satisfies the \emph{left semi-pure product property}:
for all $v_1:X_1\rpto Y_1$ and $f_2:X_2\to Y_2$, 
the morphism $v_1\ltimessp f_2:X_1\times X_2 \to Y_1\times Y_2$ 
is the unique morphism such that: 
  $$ \begin{array}{cc}
   \xymatrix@R=.5pc{ \\ 
   \txt{ $q_1\circ (v_1\ltimessp f_2) \cons v_1\circ p_1$} \\ 
   \txt{ $q_2\circ (v_1\ltimessp f_2) = f_2\circ p_2$ } \\  }
  \qquad 
  & 
\xymatrix@R=1.5pc@C=3pc{
X_1 \ar@{~>}[r]^{v_1} \ar@{}[rd]|{\consdown} & Y_1 \\
X_1\times X_2 \ar[r]^{v_1\ltimessp f_2} \ar@{~>}[u]^{p_1} \ar@{~>}[d]_{p_2} & 
  Y_1\times Y_2 \ar@{~>}[u]_{q_1} \ar@{~>}[d]^{q_2} \\
X_2 \ar[r]^{f_2} & Y_2 \ar@{}[lu]|{=} \\
} \\
  \end{array} $$
Symmetrically, a graph homomorphism $\rtimessp:K\times C\to K$
is \emph{the right semi-pure product} on $(C\subseteqq K,\cons,\times)$ 
if it extends $\times$ and satisfies the \emph{right semi-pure product property}:
for all $f_1:X_1\to Y_1$ and $v_2:X_2\rpto Y_2$,
the morphism $f_1\rtimessp v_2:X_1\times X_2 \to Y_1\times Y_2$ 
is the unique morphism such that: 
  $$ \begin{array}{cc}
   \xymatrix@R=.5pc{ \\ 
   \txt{ $q_1\circ (f_1\rtimessp v_2) = f_1\circ p_1$ } \\ 
   \txt{ $q_2\circ (f_1\rtimessp v_2) \cons v_2\circ p_2$ } \\  }
  \qquad 
  & 
 \xymatrix@R=1.5pc@C=3pc{
X_1 \ar@{}[rd]|{=} \ar[r]^{f_1} & Y_1 \\
X_1\times X_2 \ar[r]^{f_1\rtimessp v_2} \ar@{~>}[d]_{p_2} \ar@{~>}[u]^{p_1} & 
  Y_1\times Y_2 \ar@{~>}[u]_{q_1} \ar@{~>}[d]^{q_2} \\
X_2 \ar@{~>}[r]^{v_2} & Y_2  \ar@{}[lu]|{\consup}\\
} \\
  \end{array} $$
A \emph{Cartesian effect category}  is 
an effect category $(C\subseteqq K,\cons)$ 
with a binary product $\times$ on $C$ 
and with semi-pure products $\ltimessp$ and $\rtimessp$
(for short, it may be denoted $C\subseteqq K$ or simply $K$).
\end{defn}

A straightforward consequence of definition~\ref{defn:semipure} is that 
the right semi-pure product can be determined from the left one, as follows.
Consequently, from now on, we generally omit the right semi-pure products.

\begin{prop}
\label{prop:semipure-swap}
In a Cartesian effect category.
for all $f_1:X_1\to Y_1$ and  $v_2:X_2\rpto Y_2$: 
$$  (f_1\rtimes v_2) =  c_{Y_2,Y_1} \circ (v_2 \ltimes f_1) \circ c_{X_1,X_2} \;.$$
\end{prop}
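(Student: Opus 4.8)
The plan is to prove Proposition~\ref{prop:semipure-swap} by showing that the morphism $c_{Y_2,Y_1} \circ (v_2 \ltimes f_1) \circ c_{X_1,X_2} : X_1\times X_2 \to Y_1\times Y_2$ satisfies the two conditions that, by Definition~\ref{defn:semipure}, uniquely characterize $f_1\rtimes v_2$. Write $g = c_{Y_2,Y_1} \circ (v_2 \ltimes f_1) \circ c_{X_1,X_2}$. I need to check that $q_1\circ g = f_1\circ p_1$ and that $q_2\circ g \cons v_2\circ p_2$, where $p_1,p_2$ are the projections out of $X_1\times X_2$ and $q_1,q_2$ those out of $Y_1\times Y_2$. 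Uniqueness of the morphism satisfying these two conditions is already guaranteed by the right semi-pure product property, so producing one such morphism suffices.

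First I would record the behaviour of the swap morphisms on projections: if $p'_1,p'_2$ denote the projections out of $X_2\times X_1$ and $q'_1,q'_2$ those out of $Y_2\times Y_1$, then by the definition of $c$ in Section~\ref{subsec:binary} we have $p'_1\circ c_{X_1,X_2} = p_2$, $p'_2\circ c_{X_1,X_2} = p_1$ (and similarly $q'_1\circ c_{Y_2,Y_1}^{-1}$-type relations, or more directly $q_1 \circ c_{Y_2,Y_1} = q'_2$ and $q_2\circ c_{Y_2,Y_1} = q'_1$, using $c_{Y_2,Y_1} = c_{Y_1,Y_2}^{-1}$). Then for the first, equational, condition I compute $q_1 \circ g = (q_1\circ c_{Y_2,Y_1}) \circ (v_2\ltimes f_1)\circ c_{X_1,X_2} = q'_2 \circ (v_2\ltimes f_1) \circ c_{X_1,X_2}$; by the left semi-pure product property $q'_2\circ (v_2\ltimes f_1) = f_1\circ p'_2$ (the equality side, since $f_1$ sits in the second slot of $v_2\ltimes f_1$), so this equals $f_1\circ p'_2\circ c_{X_1,X_2} = f_1\circ p_1$, as required.

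For the second, consistency, condition I compute $q_2\circ g = q'_1\circ (v_2\ltimes f_1)\circ c_{X_1,X_2}$. By the left semi-pure product property, $q'_1\circ (v_2\ltimes f_1) \cons v_2\circ p'_1$. Now I need to transport this consistency along precomposition with the pure morphism $c_{X_1,X_2}$: since $c_{X_1,X_2}$ is pure, Proposition~\ref{prop:purecons} (pure substitution) gives $q'_1\circ (v_2\ltimes f_1)\circ c_{X_1,X_2} \cons v_2\circ p'_1\circ c_{X_1,X_2} = v_2\circ p_2$, which is exactly $q_2\circ g \cons v_2\circ p_2$. Hence $g$ satisfies both defining conditions of $f_1\rtimes v_2$, and by uniqueness $g = f_1\rtimes v_2$.

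I expect the only delicate point to be the bookkeeping with the swap morphisms and the projection indices — in particular, making sure that the slot of $v_2\ltimes f_1$ carrying the pure morphism $v_2$ (slot~1) lines up with the consistency condition and the slot carrying $f_1$ (slot~2) with the equality condition, after conjugation by $c$. The categorical content is light: it is just the characterization of the semi-pure product plus the fact that swaps are pure together with pure substitution (Proposition~\ref{prop:purecons}). One should also remark that $c_{X_1,X_2}$ and $c_{Y_2,Y_1}$ really are pure morphisms of $K$, which holds because $C$ is a wide subcategory closed under the Cartesian structure and the swaps are built from projections in $C$.
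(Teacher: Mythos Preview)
Your argument is correct and is precisely the natural elaboration of what the paper leaves implicit: it states only that Proposition~\ref{prop:semipure-swap} is ``a straightforward consequence of definition~\ref{defn:semipure}'' and gives no proof, so there is nothing to compare beyond confirming that your verification of the two defining conditions for $f_1\rtimes v_2$ (via the swap identities and pure substitution) is the intended route. One small bookkeeping remark: the paper's convention in Section~\ref{subsec:binary} indexes the primed projections by their \emph{target} ($p'_i:X_2\times X_1\to X_i$, so $p'_i\circ c_{X_1,X_2}=p_i$), whereas you index by \emph{position}; your equations are correct under your convention, but you may wish to align the notation with the paper's to avoid the apparent clash.
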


In a binary product $v_1\times v_2$, obviously 
the first projection $q_1\circ (v_1\times v_2)$ does not depend on $v_2$, 
and symmetrically 
the second projection $q_2\circ (v_1\times v_2)$ does not depend on $v_1$.
For a left semi-pure product  $v_1\ltimes f_2$, 
this remains true for the second projection but not for the first one. 
However, a consequence of the complementarity of $\cons$ with $\seff$ 
is that $q_1\circ (v_1\ltimes f_2)$ depends on $f_2$ precisely through its effect $\cE(f_2)$,
as stated in the next proposition. 

\begin{prop}
\label{prop:semipure-complement}
In a Cartesian effect category,
for all $v_1:X_1\rpto Y_1$,  $f_2:X_2\to Y_2$ and $f'_2:X_2\to Y_2$,  
$ \cE(q_1\circ (v_1\ltimes f_2)) = \cE(v_1\ltimes f_2) = \cE(f_2\circ p_2)$ and:
$$ \cE(f_2)=\cE(f'_2) \implies q_1\circ (v_1\ltimes f_2) = q_1\circ (v_1\ltimes f'_2) \;.$$
\end{prop}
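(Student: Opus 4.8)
The plan is to establish the two claimed identities separately, then combine them with complementarity. First I would compute the effect $\cE(v_1\ltimes f_2)$. By definition $\cE(v_1\ltimes f_2)=\tu_{Y_1\times Y_2}\circ(v_1\ltimes f_2)$, and since $\tu_{Y_1\times Y_2}$ factors through either projection composed with the appropriate $\tu$ — concretely $\tu_{Y_1\times Y_2}=\tu_{Y_2}\circ q_2$ (as $q_2$ is pure and $\uno$ is pure terminal, this is forced) — we get $\cE(v_1\ltimes f_2)=\tu_{Y_2}\circ q_2\circ(v_1\ltimes f_2)=\tu_{Y_2}\circ f_2\circ p_2$ using the equality half of the left semi-pure product property. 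That last expression is exactly $\cE(f_2\circ p_2)$. The same factorization through $q_1$ gives $\cE(v_1\ltimes f_2)=\tu_{Y_1}\circ q_1\circ(v_1\ltimes f_2)=\cE(q_1\circ(v_1\ltimes f_2))$. So the chain $\cE(q_1\circ(v_1\ltimes f_2))=\cE(v_1\ltimes f_2)=\cE(f_2\circ p_2)$ is immediate from the defining property plus the uniqueness of pure morphisms into $\uno$.

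Next I would handle the implication. Assume $\cE(f_2)=\cE(f'_2)$, i.e. $f_2\seff f'_2$. Set $g=q_1\circ(v_1\ltimes f_2)$ and $g'=q_1\circ(v_1\ltimes f'_2)$; both are parallel morphisms $X_1\times X_2\to Y_1$. By the first part, $\cE(g)=\cE(f_2\circ p_2)$ and $\cE(g')=\cE(f'_2\circ p_2)$. Since $f_2\seff f'_2$, the \ppt{Substitution} clause of Proposition~\ref{prop:effect} gives $f_2\circ p_2\seff f'_2\circ p_2$, hence $\cE(g)=\cE(g')$, so $g\seff g'$. It remains to show $g\lrcons g'$, and for this the consistency half of the semi-pure product property does the work: $q_1\circ(v_1\ltimes f_2)\cons v_1\circ p_1$ means $g\cons v_1\circ p_1$, and likewise $g'\cons v_1\circ p_1$ with the \emph{same} pure witness $v_1\circ p_1$. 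Therefore $g\lrcons g'$ by definition of $\lrcons$. Now complementarity of $\cons$ with $\seff$ (Definition~\ref{defn:value}) applies to the parallel pair $g,g'$: from $g\seff g'$ and $g\lrcons g'$ we conclude $g=g'$, which is the desired equality $q_1\circ(v_1\ltimes f_2)=q_1\circ(v_1\ltimes f'_2)$.

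The argument is essentially a direct unwinding of the semi-pure product property, so there is no serious obstacle; the only point requiring a little care is the factorization $\tu_{Y_1\times Y_2}=\tu_{Y_1}\circ q_1=\tu_{Y_2}\circ q_2$, which holds because $q_1$ and $q_2$ are pure and $\uno$ is terminal in $C$, forcing all pure morphisms $Y_1\times Y_2\rpto\uno$ to coincide. One should also note that the witness $v_1\circ p_1$ is genuinely pure — it is a composite of the pure projection $p_1$ and the pure morphism $v_1$ — so that it is a legitimate right-hand argument for $\cons$ and hence for $\lrcons$. With those observations in place the proof is a short chain: compute effects, invoke \ppt{Substitution} for $\seff$, read off the common consistency witness, and finish with complementarity.
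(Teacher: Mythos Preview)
Your proof is correct and follows essentially the same approach as the paper: for the first part you use (in effect) the pure wiping property, and for the implication you obtain $g\lrcons g'$ from the common pure witness $v_1\circ p_1$ and $g\seff g'$ from the first part together with substitution for $\seff$, then conclude by complementarity. Your exposition of the second step is in fact slightly cleaner than the paper's, which writes ``$q_2\circ h = q_2\circ h'$'' where only $q_2\circ h \seff q_2\circ h'$ actually holds.
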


\begin{proof}
The first result derives from the pure wiping property of the effect. 
For the second result, let $h=v_1\ltimes f_2$ and $h'=v_1\ltimes f'_2$. 
The left semi-pure product property implies that 
$q_1\circ h \lrcons q_1\circ h'$ and $q_2\circ h = q_2\circ h'$.
The latter implies that $q_2\circ h \seff q_2\circ h'$, and thus by
pure wiping we have also $q_1\circ h \seff q_1\circ h'$.
The result now follows from  the complementarity of $\cons$ with $\seff$.
\end{proof}

The next proposition follows from the fact that the restriction of $\ltimessp$ to $C^2$ 
coincides with the binary product functor~$\times$ on $C$.
\begin{prop}
\label{prop:semipure-id}
In a Cartesian effect category,
for all objects $X_1$ and $X_2$: 
  $$ \id_{X_1}\ltimessp \id_{X_2}  = \id_{X_1}\times \id_{X_2} = \id_{X_1\times X_2} \;.$$
\end{prop}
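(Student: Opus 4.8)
The plan is to reduce the statement to two facts already in place: that the semi-pure product \emph{extends} the binary product functor, and that $\times$ is a functor on $C^2$.

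First I would observe that the identities $\id_{X_1}$ and $\id_{X_2}$ are pure, since identities lie in the wide subcategory $C$. Hence the pair $(\id_{X_1},\id_{X_2})$ lies in the sub-graph $C\times C$ of $C\times K$, on which the graph homomorphism $\ltimessp:C\times K\to K$ coincides with $\times:C\times C\to C$ by Definition~\ref{defn:semipure} (``$\ltimessp$ extends $\times$''). This gives directly $\id_{X_1}\ltimessp\id_{X_2}=\id_{X_1}\times\id_{X_2}$. Then, invoking that $\times:C^2\to C$ is a functor (Section~\ref{subsec:binary}) and therefore preserves identities, $\id_{X_1}\times\id_{X_2}=\id_{X_1\times X_2}$. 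Chaining the two equalities yields the claim.

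If one prefers an argument that does not presuppose ``extends $\times$'' has been unpacked, I would instead verify the left semi-pure product property of Definition~\ref{defn:semipure} directly for the candidate $\id_{X_1\times X_2}$, taking $v_1=\id_{X_1}$ and $f_2=\id_{X_2}$: here the codomains equal the domains, so $q_i=p_i$, and the two conditions to check reduce to $p_1\cons p_1$, which holds by pure reflexivity of $\cons$ (Definition~\ref{defn:purecons}) since $p_1$ is pure, and $p_2=p_2$, which is trivial. Uniqueness in the universal property then forces $\id_{X_1}\ltimessp\id_{X_2}=\id_{X_1\times X_2}$, and functoriality of $\times$ supplies the middle equality as before.

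There is no real obstacle here; the only subtlety worth a sentence is that a graph homomorphism need not preserve identities in general, so the conclusion cannot be attributed to $\ltimessp$ itself — it is forced by the fact that, restricted to the sub-graph $C^2$, $\ltimessp$ agrees with the genuine functor $\times$, which does preserve them.
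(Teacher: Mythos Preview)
Your primary argument is exactly the paper's: it notes that the restriction of $\ltimessp$ to $C^2$ coincides with the functor $\times$, so on pure identities $\id_{X_1}\ltimessp\id_{X_2}=\id_{X_1}\times\id_{X_2}=\id_{X_1\times X_2}$. Your alternative verification via the universal property and your closing remark on graph homomorphisms are correct extras, but the paper does not use them.
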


\begin{rem}
\label{rem:semipure-unicity}
Let us assume that the following \emph{unicity condition} holds:
$$ \forall h,h':X\to Y_1\times Y_2 \mysep
\formula{ (q_1\circ h \lrcons q_1\circ h') \wedge (q_2\circ h = q_2\circ h')
\implies h=h' } \;. $$
In this case, if there is a graph homomorphism $\ltimessp:C\times K\to
K$ extending $\times$
and satisfying the left semi-pure product property, 
then $\ltimessp$ is the left semi-pure product.
\end{rem}

\subsection{Sequential products} 
\label{subsec:sequential}

In accordance with the intended meaning of ``sequential'', 
we define sequential products as composed from two
consecutive semi-pure products.

\begin{defn}
\label{defn:sequential}
In a Cartesian effect category,
the pair of \emph{sequential products} composed from 
the semi-products $\ltimessp$, $\rtimessp$
is made of the graph homomorphisms $\ltimesseq,\rtimesseq:K^2\to K$ 
(the \emph{left} and \emph{right} sequential products, respectively)
defined as follows:
\begin{itemize}
\item for all $f_1:X_1\to Y_1$ and $f_2:X_2\to Y_2$:
  $$ f_1\ltimesseq f_2 = (\id_{Y_1} \ltimessp f_2) \circ (f_1\rtimessp \id_{X_2}) $$
\item for all $f_1:X_1\to Y_1$ and $f_2:X_2\to Y_2$:
  $$ f_1\rtimesseq f_2 = (f_1\rtimessp \id_{Y_2}) \circ (\id_{X_1} \ltimessp f_2) $$
\end{itemize}
  $$ 
\xymatrix@R=2pc@C=2.5pc{
X_1 \ar@{}[rd]|{=} \ar[r]^{f_1} & Y_1  \ar@{~>}[r]^{\id} \ar@{}[rd]|{\consdown} & Y_1 \\
X_1\times X_2 \ar[r]^{f_1\rtimessp \id} \ar@{~>}[d]_{p_2} \ar@{~>}[u]^{p_1} & 
  Y_1\times X_2 \ar@{~>}[u]_{r_1} \ar@{~>}[d]^{r_2} \ar[r]^{\id\ltimessp f_2}  &
  Y_1\times Y_2 \ar@{~>}[u]_{q_1} \ar@{~>}[d]^{q_2} \\
X_2 \ar@{~>}[r]^{\id} & X_2  \ar@{}[lu]|{\consup} \ar[r]^{f_2} & Y_2 \ar@{}[lu]|{=}\\
} \qquad
\xymatrix@R=2pc@C=2.5pc{
X_1 \ar@{~>}[r]^{\id} \ar@{}[rd]|{\consdown} & X_1  \ar@{}[rd]|{=} \ar[r]^{f_1} & Y_1 \\
X_1\times X_2 \ar[r]^{\id\ltimessp f_2} \ar@{~>}[u]^{p_1} \ar@{~>}[d]_{p_2} & 
  X_1\times Y_2 \ar@{~>}[u]_{s_1} \ar@{~>}[d]^{s_2} \ar[r]^{f_1\rtimessp \id} & 
  Y_1\times Y_2 \ar@{~>}[u]_{q_1} \ar@{~>}[d]^{q_2} \\
X_2 \ar[r]^{f_2} & Y_2 \ar@{}[lu]|{=} \ar@{~>}[r]^{\id} & Y_2  \ar@{}[lu]|{\consup}\\
} $$
\end{defn}

It follows easily from proposition~\ref{prop:semipure-swap} that 
the right sequential product can be determined from the left one, as follows.
Consequently, from now on, we generally omit the right sequential products.

\begin{prop}
\label{prop:sequential-swap}
In a Cartesian effect category,
for all $f_1:X_1\to Y_1$ and  $f_2:X_2\to Y_2$: 
$$  (f_1\rtimesseq f_2) =  c_{Y_2,Y_1} \circ (f_2 \ltimesseq f_1) \circ c_{X_1,X_2} \;.$$
\end{prop}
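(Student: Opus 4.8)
The plan is to reduce everything to the definition of the sequential product (Definition~\ref{defn:sequential}) together with the already-established swap relation for semi-pure products (Proposition~\ref{prop:semipure-swap}) and the functoriality/naturality properties of the swap natural transformation $c$ in the Cartesian category $C$. Since all the swap morphisms $c_{X_1,X_2}$, $c_{Y_2,Y_1}$, etc., are pure isomorphisms, they can be freely moved across the composites; the whole argument is a diagram chase with no new hypotheses needed.

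First I would expand the left-hand side using Definition~\ref{defn:sequential}:
$$ f_1\rtimesseq f_2 = (f_1\rtimessp \id_{Y_2}) \circ (\id_{X_1} \ltimessp f_2) \;. $$
Then I would apply Proposition~\ref{prop:semipure-swap} to each of the two semi-pure factors, rewriting $f_1\rtimessp \id_{Y_2} = c_{Y_2,Y_1}\circ(\id_{Y_2}\ltimessp f_1)\circ c_{Y_1,Y_2}$ and, analogously, expressing $\id_{X_1}\ltimessp f_2$ via a right semi-pure product as $c_{Y_2,X_1}\circ(f_2\rtimessp\id_{X_1})\circ c_{X_1,X_2}$ — here I should double-check the direction of the swap, since $\ltimessp$ applied with a pure left argument corresponds under $c$ to $\rtimessp$ with a pure right argument, which is exactly the content of Proposition~\ref{prop:semipure-swap} read backwards. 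After substitution, the middle pair of swaps $c_{Y_1,Y_2}$ and $c_{Y_2,X_1}$ meeting in the composite must be shown to cancel against each other once we use that $c_{X_2,X_1}=c_{X_1,X_2}^{-1}$ and that $c$ is natural, i.e. $(\id\ltimessp f)\circ c = c\circ(f\rtimessp\id)$ at the level of the underlying products. Collecting the surviving outermost $c_{Y_2,Y_1}$ on the left and $c_{X_1,X_2}$ on the right, and recognizing the inner composite $(\id_{Y_2}\ltimessp f_1)\circ(f_2\rtimessp\id_{X_2})$ as precisely $f_2\ltimesseq f_1$ by Definition~\ref{defn:sequential} again, yields the claimed identity.

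The main obstacle I anticipate is bookkeeping of the swap isomorphisms: getting every subscript on $c$ in the right place and confirming that the two intermediate swaps genuinely compose to an identity (rather than to some other swap). The cleanest way to discharge this is not to manipulate the semi-pure products directly but to precompose and postcompose the whole equation with the appropriate $c$'s and reduce the claim to showing that $c_{Y_2,Y_1}^{-1}\circ(f_1\rtimesseq f_2)\circ c_{X_1,X_2}^{-1}$ equals $f_2\ltimesseq f_1$; both sides are then composites of semi-pure products, and one can match them factor by factor using Proposition~\ref{prop:semipure-swap} on each factor plus naturality of $c$ on the projections, which is how $c$ was defined. A secondary point worth a sentence is that $\ltimesseq$ and $\rtimesseq$ are only graph homomorphisms, so no appeal to functoriality of the sequential product itself is available or needed — the argument stays entirely at the level of individual morphisms and their defining equations.
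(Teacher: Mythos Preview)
Your approach is correct and is exactly the unpacking the paper has in mind when it says the result ``follows easily from Proposition~\ref{prop:semipure-swap}.'' One small correction: in your expansion of $f_1\rtimessp\id_{Y_2}$ the inner swap should be $c_{X_1,Y_2}$ (the domain is $X_1\times Y_2$), not $c_{Y_1,Y_2}$; with the right subscripts the two middle swaps are $c_{X_1,Y_2}\circ c_{Y_2,X_1}=\id_{Y_2\times X_1}$ and cancel directly as mutual inverses, so no appeal to naturality is needed at all.
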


\begin{prop}
\label{prop:sequential}
In a Cartesian effect category,
the left sequential product $\ltimesseq$ 
extends the left semi-pure product  $\ltimessp$. 
\end{prop}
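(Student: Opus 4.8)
The statement to prove is that, for every pure $v_1:X_1\rpto Y_1$ and every $f_2:X_2\to Y_2$, the sequential product $v_1\ltimesseq f_2$ coincides with the semi-pure product $v_1\ltimessp f_2$. The plan is to invoke the uniqueness clause in the left semi-pure product property (Definition~\ref{defn:semipure}): it suffices to check that the morphism $v_1\ltimesseq f_2$ satisfies the two defining conditions of $v_1\ltimessp f_2$, namely $q_1\circ(v_1\ltimesseq f_2)\cons v_1\circ p_1$ and $q_2\circ(v_1\ltimesseq f_2)=f_2\circ p_2$, where $p_i$ and $q_i$ denote the projections of $X_1\times X_2$ and $Y_1\times Y_2$. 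By Definition~\ref{defn:sequential}, $v_1\ltimesseq f_2=(\id_{Y_1}\ltimessp f_2)\circ(v_1\rtimessp\id_{X_2})$, and since $v_1$ and $\id_{X_2}$ are both pure and $\rtimessp$ extends $\times$, we have $v_1\rtimessp\id_{X_2}=v_1\times\id_{X_2}$, which is itself pure. So I would set $h=(\id_{Y_1}\ltimessp f_2)\circ(v_1\times\id_{X_2})$ and verify the two conditions for $h$.

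Writing $r_1:Y_1\times X_2\rpto Y_1$ and $r_2:Y_1\times X_2\rpto X_2$ for the intermediate projections as in Definition~\ref{defn:sequential}, the left semi-pure product property applied to $\id_{Y_1}\ltimessp f_2$ gives $q_2\circ(\id_{Y_1}\ltimessp f_2)=f_2\circ r_2$ and $q_1\circ(\id_{Y_1}\ltimessp f_2)\cons\id_{Y_1}\circ r_1=r_1$. For the equality, the binary product property gives $r_2\circ(v_1\times\id_{X_2})=\id_{X_2}\circ p_2=p_2$, so $q_2\circ h=f_2\circ r_2\circ(v_1\times\id_{X_2})=f_2\circ p_2$. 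For the consistency, since $v_1\times\id_{X_2}$ is pure, the pure substitution property (Proposition~\ref{prop:purecons}) turns $q_1\circ(\id_{Y_1}\ltimessp f_2)\cons r_1$ into $q_1\circ(\id_{Y_1}\ltimessp f_2)\circ(v_1\times\id_{X_2})\cons r_1\circ(v_1\times\id_{X_2})$, and the binary product property gives $r_1\circ(v_1\times\id_{X_2})=v_1\circ p_1$, so $q_1\circ h\cons v_1\circ p_1$.

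Both conditions hold for $h$, so the uniqueness clause of Definition~\ref{defn:semipure} forces $h=v_1\ltimessp f_2$, that is, $v_1\ltimesseq f_2=v_1\ltimessp f_2$; on objects the two homomorphisms agree trivially, so $\ltimesseq$ extends $\ltimessp$. The symmetric statement for $\rtimesseq$ then follows either by the same computation or from Proposition~\ref{prop:sequential-swap} combined with this result. I do not expect a genuine obstacle: the only point that needs care is observing that $v_1\rtimessp\id_{X_2}$ degenerates to the honest product $v_1\times\id_{X_2}$, which is what makes pure substitution applicable; after that, the argument is just bookkeeping with the defining equations of the semi-pure and binary products.
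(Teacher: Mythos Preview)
Your proof is correct and follows essentially the same approach as the paper: reduce $v_1\rtimessp\id_{X_2}$ to $v_1\times\id_{X_2}$ since both arguments are pure, apply the left semi-pure product property to $\id_{Y_1}\ltimessp f_2$, use pure substitution for the consistency and the binary product property for the equality, then conclude by the uniqueness clause in Definition~\ref{defn:semipure}. The paper's proof is exactly this, only slightly more terse at the end (it stops at ``which is the left semi-pure product property'' rather than naming uniqueness explicitly).
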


\begin{proof}
Let $v:X_1\rpto Y_1$ and $f:X_2\to Y_2$. 
Since $v\ltimesseq f = (\id_{Y_1} \ltimessp f) \circ (v\rtimessp \id_{X_2}) $
and since $\rtimessp$ extends the binary product $\times$ on $C^2$:
  $$ v\ltimesseq f = (\id_{Y_1} \ltimessp f) \circ (v\times \id_{X_2}) \;.$$
The left semi-pure product property yields: 
  $$q_1\circ (\id_{Y_1} \ltimessp f) \cons r_1 
  \et 
  q_2\circ (\id_{Y_1} \ltimessp f) = f\circ r_2 $$
so that by pure substitution: 
  $$ q_1\circ (v\ltimesseq f) \cons r_1 \circ (v\times \id_{X_2})  
  \et 
  q_2\circ (v\ltimesseq f) = f\circ r_2  \circ (v\times \id_{X_2}) $$
hence from the binary product property we get: 
  $$ q_1\circ (v\ltimesseq f) \cons v\circ p_1 
  \et 
  q_2\circ (v\ltimesseq f) = f\circ p_2 $$
which is the left semi-pure product property.
\end{proof}

\begin{rem}
\label{rem:sequential}
It follows from proposition~\ref{prop:sequential}
that we may drop the subscript ``$\mathrm{seq}$''. 
\end{rem}

\begin{defn}
\label{defn:pairing}
In a Cartesian effect category,
for all $f_1:X\to Y_1$ and $f_2:X\to Y_2$ 
the \emph{left pairing} of $f_1$ and $f_2$ is 
$\tuple{f_1,f_2}_l=(f_1\ltimes f_2) \circ \tuple{\id_X,\id_X}:X\to Y_1\times Y_2$ 
and the \emph{right pairing} of $f_1$ and $f_2$ is 
$\tuple{f_1,f_2}_r=(f_1\rtimes f_2) \circ \tuple{\id_X,\id_X}:X\to Y_1\times Y_2$.
\end{defn}

\begin{rem}
\label{rem:sequential-direct-later}
Another point of view on sequential products, as ``direct'' generalizations of 
binary products (independently from any \textit{a priori} semi-pure products)  
is given in section~\ref{subsec:sequential-direct}.
\end{rem}

\subsection{Pure morphisms are central} 
\label{subsec:central} 

The next definition is similar to the 
definition of central morphisms in a binoidal category,  
see section~\ref{subsec:freyd}. 

\begin{defn}
\label{defn:central}
In a Cartesian effect category,
a morphism $k_1$ is \emph{central} if for each morphism $f_2$: 
 $$ k_1\ltimes f_2 = k_1\rtimes f_2 \;.$$ 
Then it follows from proposition~\ref{prop:sequential-swap} that 
$f_2\ltimes k_1 = f_2\rtimes k_1$.
The \emph{center} $C_K$ of $K$ is made of 
the objects of $K$ together with the central morphisms,
we will prove in theorem~\ref{thm:functor} that $C_K$ is a subcategory of $K$.
\end{defn}

\begin{rem}
\label{rem:central}
According to definition~\ref{defn:sequential}, in a Cartesian effect category 
a morphism $k_1:X_1\to Y_1$ is central if and only if for each morphism $f_2:X_2\to Y_2$:
  $$ (k_1\rtimes \id_{Y_2}) \circ (\id_{X_1} \ltimes f_2)
  = (\id_{Y_1} \ltimes f_2) \circ (k_1\rtimes \id_{X_2}) \;.$$
\end{rem}

\begin{rem}
\label{rem:central-id}
It follows from definition~\ref{defn:sequential} and proposition~\ref{prop:semipure-id} 
that the identities are central.
Theorem~\ref{thm:central} now proves that this is valid for all pure morphisms. 
\end{rem}

\begin{thm}
\label{thm:central}
In a Cartesian effect category, every pure morphism is central.
\end{thm}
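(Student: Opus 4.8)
The plan is to fix a pure morphism $v:X_1\rpto Y_1$ and an arbitrary morphism $f_2:X_2\to Y_2$, and to show directly that $v\ltimes f_2 = v\rtimes f_2$ by the complementarity property of definition~\ref{defn:value}: it suffices to prove that these two parallel morphisms $X_1\times X_2\to Y_1\times Y_2$ have the same effect and are consistent. I would name $h=v\ltimes f_2$ and $h'=v\rtimes f_2$, and work componentwise through the two projections $q_1,q_2$ of $Y_1\times Y_2$, using the universal properties of the semi-pure products.

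First I would handle the second component. Using definition~\ref{defn:sequential} to unfold $v\ltimes f_2=(\id_{Y_1}\ltimessp f_2)\circ(v\rtimessp\id_{X_2})$ and, since $v$ is pure, $v\rtimessp\id_{X_2}=v\times\id_{X_2}$, together with the left semi-pure product property and pure substitution (as in the proof of proposition~\ref{prop:sequential}), one gets $q_2\circ h = f_2\circ p_2$. Symmetrically, unfolding $v\rtimes f_2 = (v\rtimessp\id_{Y_2})\circ(\id_{X_1}\ltimessp f_2)$ with $v$ pure gives $v\rtimessp\id_{Y_2}=v\times\id_{Y_2}$, and the left semi-pure product property applied to $\id_{X_1}\ltimessp f_2$ plus substitution yields $q_2\circ h' = f_2\circ p_2$ as well. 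Hence $q_2\circ h = q_2\circ h'$, which in particular gives $q_2\circ h\seff q_2\circ h'$, and then by pure wiping (proposition~\ref{prop:effect}, wiping $q_1$) we obtain $h\seff h'$, i.e.\ the two morphisms have the same effect.

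Next I would treat the first component, where the analogous unfolding only gives consistencies rather than equalities. From $v\ltimes f_2=(\id_{Y_1}\ltimessp f_2)\circ(v\times\id_{X_2})$ and the left semi-pure property $q_1\circ(\id_{Y_1}\ltimessp f_2)\cons r_1$ (with $r_1$ the first projection of $Y_1\times X_2$), pure substitution along $v\times\id_{X_2}$ gives $q_1\circ h\cons r_1\circ(v\times\id_{X_2}) = v\circ p_1$, i.e.\ $q_1\circ h\cons v\circ p_1$. For $h'$, I would unfold $v\rtimes f_2=(v\times\id_{Y_2})\circ(\id_{X_1}\ltimessp f_2)$; the right-hand factor is central-free so the right semi-pure property of $v\times\id_{Y_2}$ gives $q_1\circ(v\times\id_{Y_2})=v\circ s_1$ where $s_1$ is the first projection of $X_1\times Y_2$, and then composing with $\id_{X_1}\ltimessp f_2$ and using $s_1\circ(\id_{X_1}\ltimessp f_2)\cons p_1$ (left semi-pure property) together with pure replacement (proposition~\ref{prop:purecons}) gives $q_1\circ h'\cons v\circ p_1$. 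Thus $q_1\circ h$ and $q_1\circ h'$ are both consistent with the single pure morphism $v\circ p_1$, so $q_1\circ h\lrcons q_1\circ h'$.

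Finally I would assemble: since $q_2\circ h=q_2\circ h'$ (hence $q_2\circ h\lrcons q_2\circ h'$ is not even needed, but at least $\lrcons$ holds trivially on that component via purity — actually equality suffices) and $q_1\circ h\lrcons q_1\circ h'$, the pairing of consistent morphisms is consistent, so $h\lrcons h'$; combined with $h\seff h'$ from the second paragraph, complementarity with $\seff$ forces $h=h'$, which is exactly $v\ltimes f_2=v\rtimes f_2$, i.e.\ $v$ is central. The step I expect to be the main obstacle is the first-component computation for $h'=v\rtimes f_2$: getting $q_1\circ h'\cons v\circ p_1$ requires chaining the right semi-pure property of $v\times\id_{Y_2}$ with a consistency coming from the left semi-pure factor, and one must invoke pure replacement (rather than pure substitution) in the correct direction; keeping track of which projections ($r_i,s_i,p_i,q_i$) live on which intermediate product $Y_1\times X_2$ versus $X_1\times Y_2$ is where a careless slip would creep in.
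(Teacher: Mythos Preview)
Your computations for the two projections are exactly right and coincide with what the paper does: you obtain $q_1\circ h'\cons v\circ p_1$ and $q_2\circ h'=f_2\circ p_2$ for $h'=v\rtimes f_2$. The gap is in your final assembly. You claim that ``the pairing of consistent morphisms is consistent, so $h\lrcons h'$'', but no such principle is available in an effect category: $\lrcons$ is not reflexive, $q_2\circ h=f_2\circ p_2$ need not be consistent with any pure morphism (since $f_2$ is arbitrary), and there is no axiom allowing you to lift componentwise consistencies through the projections to a consistency $h\lrcons h'$. Indeed the statement ``$q_1\circ h\lrcons q_1\circ h'$ and $q_2\circ h=q_2\circ h'$ imply $h=h'$'' is precisely the \emph{unicity condition} of remark~\ref{rem:semipure-unicity}, which is an extra hypothesis, not a consequence of the axioms. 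So the complementarity route does not close.

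The fix is that you have already done all the work needed for the paper's argument. Having verified that $h'=v\rtimes f_2$ satisfies $q_1\circ h'\cons v\circ p_1$ and $q_2\circ h'=f_2\circ p_2$, you should simply invoke the \emph{uniqueness} clause in definition~\ref{defn:semipure}: the left semi-pure product $v\ltimessp f_2$ is by definition the \emph{unique} morphism with those two properties, hence $h'=v\ltimessp f_2$. Since $v\ltimes f_2=v\ltimessp f_2$ (either by definition, or via proposition~\ref{prop:sequential}), this gives $v\rtimes f_2=v\ltimes f_2$ directly. That is the paper's proof; your detour through complementarity is unnecessary and, as written, unjustified.
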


\begin{proof}
Given $v:X_1\rpto Y_1$ and $f:X_2\to Y_2$,
let us prove that the left semi-pure product $v\ltimes f$
is equal to the right sequential product $v\rtimes f$. Let:
  $$ h = v\rtimes f = (v\rtimes \id_{Y_2}) \circ (\id_{X_1} \ltimes f)
   = (v\times \id_{Y_2}) \circ (\id_{X_1} \ltimes f) \;.$$
Using the binary product property:
 $$ q_1\circ h = v \circ s_1 \circ  (\id_{X_1} \ltimes f)
   \et  q_2\circ h = s_2 \circ  (\id_{X_1} \ltimes f) $$
then the left semi-pure product property:
  $$ s_1 \circ  (\id_{X_1} \ltimes f) \cons p_1 
   \et  s_2 \circ  (\id_{X_1} \ltimes f) = f \circ p_2 $$
we get by pure replacement: 
 $$ q_1\circ h  \cons v \circ p_1 
  \et  q_2\circ h = f \circ p_2 $$
which means that  the left semi-pure product property is satisfied: 
$h=v\ltimes f$, as required.
\end{proof}

\begin{rem}
\label{rem:times}
In view of theorem~\ref{thm:central}
there would be no ambiguity in denoting $\times$ for 
the semi-pure products $\ltimes$ and $\rtimes$, 
however we will not use this opportunity,
in order to keep in mind that the semi-pure products are not real products.
\end{rem}

\subsection{Functoriality properties} 
\label{subsec:functor}

As reminded in section~\ref{subsec:binary}, 
the binary product in a Cartesian category is a functor.
In this section it is proved that similarly the semi-pure products
in a Cartesian effect category are functors. 

\begin{lem}
\label{lem:functor-idcompose} 
In  a Cartesian effect category, 
for all $X_1$, $f_2:X_2\to Y_2$ and $g_2:Y_2\to Z_2$:  
$$  (\id_{X_1}\ltimes g_2) \circ (\id_{X_1}\ltimes f_2) = \id_{X_1} \ltimes (g_2\circ f_2)  \;.$$
\end{lem}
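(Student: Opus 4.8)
The statement is the "identity-slot" functoriality of the semi-pure product $\ltimes$: on the left argument fixed at $\id_{X_1}$, the map $f_2\mapsto\id_{X_1}\ltimes f_2$ is a functor on the hom-sets with domain $X_1$. The natural strategy is to verify the defining universal property of $\id_{X_1}\ltimes(g_2\circ f_2)$ for the composite $H=(\id_{X_1}\ltimes g_2)\circ(\id_{X_1}\ltimes f_2)$, i.e.\ to show that $q_1\circ H\cons p_1$ (where $p_1:X_1\times X_2\rpto X_1$) and $q_2\circ H=(g_2\circ f_2)\circ p_2$, and then invoke the uniqueness clause in definition~\ref{defn:semipure} to conclude $H=\id_{X_1}\ltimes(g_2\circ f_2)$.

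First I would name the intermediate object $X_1\times Y_2$ with projections $r_1:X_1\times Y_2\rpto X_1$, $r_2:X_1\times Y_2\rpto Y_2$, so that the left semi-pure product property applied to $\id_{X_1}\ltimes f_2$ reads $r_1\circ(\id_{X_1}\ltimes f_2)\cons p_1$ and $r_2\circ(\id_{X_1}\ltimes f_2)=f_2\circ p_2$, and applied to $\id_{X_1}\ltimes g_2$ reads $q_1\circ(\id_{X_1}\ltimes g_2)\cons r_1$ and $q_2\circ(\id_{X_1}\ltimes g_2)=g_2\circ r_2$. The second-projection equation is then a direct chain: $q_2\circ H=q_2\circ(\id_{X_1}\ltimes g_2)\circ(\id_{X_1}\ltimes f_2)=g_2\circ r_2\circ(\id_{X_1}\ltimes f_2)=g_2\circ f_2\circ p_2$, using the two equalities above; note $r_1$ and $r_2$ being pure is what makes everything fit.

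For the first-projection consistency $q_1\circ H\cons p_1$, I would use $q_1\circ H=\bigl(q_1\circ(\id_{X_1}\ltimes g_2)\bigr)\circ(\id_{X_1}\ltimes f_2)$ together with $q_1\circ(\id_{X_1}\ltimes g_2)\cons r_1$ and $r_1\circ(\id_{X_1}\ltimes f_2)\cons p_1$; composing these two consistencies requires the compatibility-with-composition axiom of consistency from definition~\ref{defn:purecons} (or, in packaged form, a suitable instance of the preservation/substitution properties in proposition~\ref{prop:purecons}), with $r_1:X_1\times Y_2\rpto X_1$ playing the role of the pure "hinge" morphism. This is the only genuinely non-routine point: one must line up the data so that the hypotheses $(u\circ f\cons v)\wedge(g\cons w\circ u)$ of compatibility are met — here taking $f=\id_{X_1}\ltimes f_2$, $g=q_1\circ(\id_{X_1}\ltimes g_2)$, $u=r_1$, $v=p_1$, $w=\id_{X_1}$ — after which $g\circ f\cons w\circ v$ gives exactly $q_1\circ H\cons p_1$.

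Having established both clauses, uniqueness in definition~\ref{defn:semipure} forces $H=\id_{X_1}\ltimes(g_2\circ f_2)$, which is the claim. The main obstacle, as noted, is the consistency-composition bookkeeping for the first projection; everything else is formal manipulation with pure projections, which are pure precisely because $\ltimessp$ has codomain wired through the product of $C$. I would keep the write-up short by citing proposition~\ref{prop:purecons} for the composed-consistency step rather than re-deriving it from the axiom.
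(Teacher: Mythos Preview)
Your proposal is correct and follows exactly the approach the paper sketches: chase the two semi-pure product squares, chain the equalities on the second projection, use the compatibility-with-composition axiom (definition~\ref{defn:purecons}) on the first projection, and conclude by uniqueness. Your explicit identification of $f,g,u,v,w$ for the compatibility step is precisely the ``diagram chase'' the paper leaves to the reader.
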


\begin{proof}
The proof is easily obtained by chasing the following diagram
and using the compatibility of consistency with composition.
$$ \xymatrix@R=1.5pc@C=2pc{
X_1 \ar@{~>}[r]^{\id} \ar@{}[rd]|{\consdown} & 
  X_1 \ar@{~>}[r]^{\id} \ar@{}[rd]|{\consdown} & X_1  \\
X_1\times X_2 \ar@{~>}[u]^{p_1} \ar@{~>}[d]_{p_2} \ar[r]^{\id\ltimes f_2} & 
  X_1\times Y_2 \ar@{~>}[u]_{s_1} \ar@{~>}[d]^{s_2} \ar[r]^{\id\ltimes g_2} &
  X_1\times Z_2 \ar@{~>}[u]_{s'_1} \ar@{~>}[d]^{s'_2}  \\
X_2 \ar[r]^{f_2} & 
  Y_2 \ar@{}[lu]|{=} \ar[r]^{g_2} & 
  Z_2 \ar@{}[lu]|{=}  \\
}$$
\end{proof}

\begin{lem}
\label{lem:functor-compose} 
In  a Cartesian effect category,
for all $f_1:X_1\to Y_1$, $k_1:Y_1\to Z_1$, $f_2:X_2\to Y_2$ and $g_2:Y_2\to Z_2$
with $k_1$ central:  
$$  (k_1\ltimes g_2) \circ (f_1\ltimes f_2) = (k_1\circ f_1) \ltimes (g_2\circ f_2) $$
\end{lem}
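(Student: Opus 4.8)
The plan is to expand both sides via the definition of the sequential product (Definition~\ref{defn:sequential}), reorganize the four semi-pure factors that appear using the centrality of $k_1$, and then collapse the two resulting pairs of consecutive factors by functoriality lemmas.

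First I would record the mirror of Lemma~\ref{lem:functor-idcompose} on the left-hand variable: for all $f_1:X_1\to Y_1$, $g_1:Y_1\to Z_1$ and every object $X_2$,
$$ (g_1\rtimes\id_{X_2})\circ(f_1\rtimes\id_{X_2}) = (g_1\circ f_1)\rtimes\id_{X_2} \;. $$
This follows from Lemma~\ref{lem:functor-idcompose} by conjugating with the swap isomorphism $c$ and invoking Proposition~\ref{prop:semipure-swap}, using $c_{Y_1,X_2}\circ c_{X_2,Y_1}=\id$ to cancel the inner swaps.

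Then the main computation runs as follows. By Definition~\ref{defn:sequential},
$$ (k_1\ltimes g_2)\circ(f_1\ltimes f_2) = (\id_{Z_1}\ltimes g_2)\circ(k_1\rtimes\id_{Y_2})\circ(\id_{Y_1}\ltimes f_2)\circ(f_1\rtimes\id_{X_2}) \;. $$
Since $k_1:Y_1\to Z_1$ is central, Remark~\ref{rem:central} applied to $k_1$ and $f_2:X_2\to Y_2$ gives
$$ (k_1\rtimes\id_{Y_2})\circ(\id_{Y_1}\ltimes f_2) = (\id_{Z_1}\ltimes f_2)\circ(k_1\rtimes\id_{X_2}) \;, $$
so after this substitution and a reassociation the right-hand side becomes
$$ \big((\id_{Z_1}\ltimes g_2)\circ(\id_{Z_1}\ltimes f_2)\big)\circ\big((k_1\rtimes\id_{X_2})\circ(f_1\rtimes\id_{X_2})\big) \;. $$
Now Lemma~\ref{lem:functor-idcompose} collapses the first pair to $\id_{Z_1}\ltimes(g_2\circ f_2)$ and the mirror lemma above collapses the second pair to $(k_1\circ f_1)\rtimes\id_{X_2}$; the composite of these two factors is exactly $(k_1\circ f_1)\ltimes(g_2\circ f_2)$ by Definition~\ref{defn:sequential}, which is the claim (note that $\ltimes$ and $\rtimes$ with an identity argument agree with the sequential products by Proposition~\ref{prop:sequential}, so no ambiguity arises).

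The only delicate points are bookkeeping of objects — in particular, it is precisely the centrality of $k_1$ that rewrites the identity index $Y_2$ as $X_2$ on the $k_1\rtimes\id$ factor, and this is what makes the two $\rtimes$-factors composable and ready for the mirror lemma — together with the justification of that mirror lemma itself; everything else is a routine reassociation. Alternatively one could derive the mirror of Lemma~\ref{lem:functor-idcompose} from Proposition~\ref{prop:sequential-swap} and Lemma~\ref{lem:functor-idcompose} without unfolding $c$, but conjugation by the swap is the cleanest route.
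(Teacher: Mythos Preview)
Your argument is correct and follows exactly the paper's route: expand via Definition~\ref{defn:sequential}, use centrality of $k_1$ to swap the middle two factors, then collapse with Lemma~\ref{lem:functor-idcompose} (and its right-hand mirror) before reassembling via Definition~\ref{defn:sequential}. The only difference is that you spell out and justify the mirror of Lemma~\ref{lem:functor-idcompose} explicitly via conjugation by the swap, whereas the paper leaves this symmetric instance implicit.
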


\begin{proof}
According to definition~\ref{defn:sequential}:
$$  (k_1\ltimes g_2) \circ (f_1\ltimes f_2) = 
  (\id_{Z_1} \ltimes g_2) \circ (k_1\rtimes \id_{Y_2})  \circ
 (\id_{Y_1} \ltimes f_2) \circ (f_1\rtimes \id_{X_2})  \;. $$
Since $k_1$ is central, this is equal to 
$ (\id_{Z_1} \ltimes g_2) \circ  (\id_{Z_1} \ltimes f_2) \circ 
 (k_1\rtimes \id_{X_2})  \circ (f_1\rtimes \id_{X_2})$.
The result now follows from lemma~\ref{lem:functor-idcompose} 
and definition~\ref{defn:sequential} again.
\end{proof}

\begin{thm}
\label{thm:functor} 
In a Cartesian effect category $C\subseteqq K$, 
the center $C_K$ is a wide subcategory of $K$ that contains~$C$,
and the restrictions of the sequential products 
are functors $\ltimes:C_K\times K\to K$ and $\rtimes:K\times C_K\to K$.
\end{thm}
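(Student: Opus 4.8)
The plan is to establish the three assertions in turn, relying on Lemma~\ref{lem:functor-compose} as the workhorse. First I would show that $C_K$ contains $C$: by Theorem~\ref{thm:central} every pure morphism is central, so every morphism of $C$ is an object-preserving central morphism, and since $C$ is wide in $K$ the center $C_K$ is wide as well (it contains all objects by definition and now contains at least all pure morphisms). In particular the identities lie in $C_K$, as already noted in Remark~\ref{rem:central-id}. It remains to check that $C_K$ is closed under composition. So let $k_1:X_1\to Y_1$ and $k'_1:Y_1\to Z_1$ both be central; I must show $k'_1\circ k_1$ is central, i.e. that for every $f_2:X_2\to Y_2$ we have $(k'_1\circ k_1)\ltimes f_2 = (k'_1\circ k_1)\rtimes f_2$. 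Applying Lemma~\ref{lem:functor-compose} with the central morphism $k'_1$ (and taking $f_1=k_1$, $g_2=\id$, so $g_2\circ f_2=f_2$) gives $(k'_1\ltimes \id)\circ(k_1\ltimes f_2) = (k'_1\circ k_1)\ltimes f_2$; similarly, using Proposition~\ref{prop:sequential-swap} to transport Lemma~\ref{lem:functor-compose} to the right sequential product, $(k'_1\rtimes\id)\circ(k_1\rtimes f_2) = (k'_1\circ k_1)\rtimes f_2$. Since $k_1$ and $k'_1$ are central we may replace $\ltimes$ by $\rtimes$ on the left-hand factors, and the two right-hand sides coincide; hence $k'_1\circ k_1$ is central.

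Next I would prove functoriality of the restricted left sequential product $\ltimes:C_K\times K\to K$ (the right one then follows by Proposition~\ref{prop:sequential-swap}, conjugating everything by the swap isomorphisms $c$, which are pure hence central). Functoriality of a bifunctor amounts to two facts: preservation of identities and preservation of composition. Identity preservation is Proposition~\ref{prop:semipure-id} together with Proposition~\ref{prop:sequential} (the sequential product extends the semi-pure product, which extends $\times$), giving $\id_{X_1}\ltimes\id_{X_2}=\id_{X_1\times X_2}$. Preservation of composition is exactly the interchange law $(k_1\ltimes g_2)\circ(f_1\ltimes f_2) = (k_1\circ f_1)\ltimes(g_2\circ f_2)$ whenever the first factor $k_1$ is central, which is precisely Lemma~\ref{lem:functor-compose}; and when we restrict the first argument to $C_K$, both $k_1$ and $f_1$ are central, so $k_1\circ f_1$ is central too by the first part, and the right-hand side is well-typed as a morphism of $C_K\times K$. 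Thus the diagram-level composition in $C_K\times K$ maps to composition in $K$, and $\ltimes:C_K\times K\to K$ is a genuine functor. The same argument with the roles swapped, via Proposition~\ref{prop:sequential-swap}, yields that $\rtimes:K\times C_K\to K$ is a functor.

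The only genuinely delicate point is making sure the composition-closure of $C_K$ is argued without circularity: Lemma~\ref{lem:functor-compose} is stated for a single central morphism $k_1$ on the left, not for a composite, so one must feed it one central morphism at a time as above rather than assuming the result one is trying to prove. Everything else is bookkeeping: checking that the domains and codomains match up in $C_K\times K$, and observing that the swap transformation used to derive the right-hand statements from the left-hand ones consists of pure — hence, by Theorem~\ref{thm:central}, central — morphisms, so conjugation by $c$ stays inside $C_K$ in the relevant argument. I would present the center-is-a-subcategory part first, since functoriality of the restricted products presupposes that $C_K\times K$ is actually a category, and then dispatch the functor claims by citing Propositions~\ref{prop:semipure-id} and~\ref{prop:sequential} and Lemma~\ref{lem:functor-compose}.
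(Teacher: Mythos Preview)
Your proof is correct and follows essentially the same route as the paper's own argument: centrality of identities via Remark~\ref{rem:central-id}, closure of $C_K$ under composition via Lemma~\ref{lem:functor-compose} (and its swap-conjugate), containment $C\subseteq C_K$ via Theorem~\ref{thm:central}, and functoriality from Proposition~\ref{prop:semipure-id} together with Lemma~\ref{lem:functor-compose}. Your version is simply more explicit, in particular in spelling out how Lemma~\ref{lem:functor-compose} is instantiated with $g_2=\id$ to obtain composition-closure of the center without circularity; the paper compresses this into a one-line citation.
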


\begin{proof}
The central morphisms form a subcategory of $K$:
this comes from remark~\ref{rem:central-id} for identities 
and from lemma~\ref{lem:functor-compose}  and its symmetric version 
for composition.
The center $C_K$ is wide by definition, 
and it contains $C$ because of theorem~\ref{thm:central}.
The restrictions of the left sequential product is a functor: 
by proposition~\ref{prop:semipure-id} for identities 
and lemma~\ref{lem:functor-compose}  for composition.
Symmetrically, the restrictions of the right sequential product is a functor.
\end{proof}

\subsection{Naturality properties} 
\label{subsec:natural}

As reminded in section~\ref{subsec:binary}, 
a Cartesian category $C$ with $\times:C^2\to C$ and $\uno$ 
forms a symmetric monoidal category,
which means that the projections can be combined
in order to get natural isomorphisms $a,r,l,c$ with components:
\begin{itemize}
\item $a_X=a_{X_1,X_2,X_3}:(X_1\times X_2)\times X_3 \to X_1\times (X_2\times X_3)$, 
\item $r_X:\uno\times X\to X$, $l_X:X\times \uno\to X$, 
\item $c_X=c_{X_1,X_2}:X_1\times X_2 \to X_2\times X_1$, 
\end{itemize}
which satisfy the symmetric monoidal coherence conditions \citep{Maclane97}.
In this section we prove that in a Cartesian effect category $C\subseteqq K$, 
the natural isomorphisms $a,r,l,c$ that are defined from $C$
satisfy more general naturality conditions,  
involving the sequential products $\ltimes,\rtimes$.
The verification of the next result is straightforward from the definitions. 

\begin{lem}
\label{lem:monoidal} 
In a Cartesian effect category,
for all $f_1$, $f_2$, $f_3$ and pure $v_1$, $v_2$, $v_3$: 
$$ \left\{ \begin{array}{l}
   a_Y \circ (f_1\rtimes (v_2\rtimes v_3)) = ((f_1 \rtimes v_2)\rtimes v_3) \circ a_X \\
   a_Y \circ (v_1\ltimes (f_2\rtimes v_3)) = ((v_1 \ltimes f_2)\rtimes v_3) \circ a_X \\
   a_Y \circ (v_1\ltimes (v_2\ltimes f_3)) = ((v_1 \ltimes v_2)\ltimes f_3) \circ a_X \\
  \end{array} \right. $$
\end{lem}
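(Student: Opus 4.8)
\textbf{Proof plan for Lemma~\ref{lem:monoidal}.}
The plan is to prove each of the three identities by the same strategy used throughout section~\ref{subsec:functor}: compose both sides with the projections out of the codomain $Y_1\times(Y_2\times Y_3)$, reduce everything to the binary product property on $C$ together with the (left or right) semi-pure product property, and then invoke the unicity built into those universal properties. Since $a_Y$, being pure, is central by theorem~\ref{thm:central}, one can freely slide it past the semi-pure products when needed; but the cleaner route is to postcompose with the three projections $t_1:Y_1\times(Y_2\times Y_3)\rpto Y_1$, $t_2\circ(\text{second component})$, etc., and check the two sides agree component by component.

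First I would treat the third identity, $a_Y \circ (v_1\ltimes(v_2\ltimes f_3)) = ((v_1\ltimes v_2)\ltimes f_3)\circ a_X$. Here only $f_3$ is impure, so on the $Y_1$ and $Y_2$ components the semi-pure product property gives \emph{consistencies} $t_1\circ(\text{LHS}) \cons v_1\circ(\cdots)$ and similarly for the $Y_2$-slot, while on the $Y_3$ component it gives a genuine \emph{equality} $t_3\circ(\text{LHS}) = f_3\circ(\cdots) = t_3\circ(\text{RHS})$. Both sides have the same effect (by proposition~\ref{prop:semipure-complement}, the effect of a semi-pure product is the effect of its impure factor, namely $\cE(f_3)$ composed with projections, and $a_X$, $a_Y$ are pure so wipe out under $\seff$), so by complementarity of $\cons$ with $\seff$ the two consistencies upgrade to equalities; combined with the $Y_3$-equality and the unicity part of the semi-pure product property, the two morphisms coincide. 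The second identity, with $v_1$ and $v_3$ pure and $f_2$ impure, is handled identically: the impure content sits in the middle slot, its effect controls both sides, and the naturality square for $a$ on the purely-pure slots is just the ordinary naturality of $a$ in $C$.

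The first identity, $a_Y\circ(f_1\rtimes(v_2\rtimes v_3)) = ((f_1\rtimes v_2)\rtimes v_3)\circ a_X$, is the one with the impure factor in the \emph{outermost} (left) position and uses the right semi-pure products; by proposition~\ref{prop:semipure-swap} it is equivalent to the mirror statement about left semi-pure products, and one may either dualize the previous argument or deduce it from the third identity via the swap isomorphisms and the symmetric monoidal coherence (pentagon/hexagon) relations among $a$ and $c$ in $C$. The main obstacle I anticipate is purely bookkeeping: keeping straight which projection names ($p,q,r,s,t$ with indices) label which factor after reassociation, so that the reindexing maps built from projections that appear on the right of $a_X$ genuinely match those coming from decomposing $a_Y$ on the left. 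Once the projection indices are pinned down, each verification is a short diagram chase invoking only the binary product property, the semi-pure product property, pure substitution/replacement, and complementarity — exactly the toolkit already assembled in propositions~\ref{prop:purecons}--\ref{prop:value} and theorem~\ref{thm:central}.
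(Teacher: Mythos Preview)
The paper offers no proof beyond ``straightforward from the definitions,'' and the intended argument is indeed a direct verification of the semi-pure universal property: conjugate one side by the pure isomorphism $a$ and check that it satisfies the two clauses (one consistency, one equality) that characterise the semi-pure product appearing on the other side. Your plan is in the right spirit, but the specific route you describe---project to the three individual factors $Y_1,Y_2,Y_3$, upgrade the two consistencies to equalities via complementarity, then ``invoke unicity''---has a real gap at the final step. In $K$ the product is \emph{not} a product: knowing $t_i\circ\mathrm{LHS}=t_i\circ\mathrm{RHS}$ for $i=1,2,3$ does not imply $\mathrm{LHS}=\mathrm{RHS}$. The uniqueness in definition~\ref{defn:semipure} only applies to a morphism $h:X_1\times X_2\to Y_1\times Y_2$ whose first projection is $\cons v\circ p_1$ and whose second projection \emph{equals} $f\circ p_2$; it is not a statement that equal projections force equal morphisms. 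Concretely, to use the outer uniqueness for $v_1\ltimes(v_2\ltimes f_3)$ you must establish the \emph{equality} $r_{23}\circ h=(v_2\ltimes f_3)\circ p'_{23}$, and the fact that the two sides of this equation have the same projections to $Y_2$ and $Y_3$ is not enough---neither the binary product property (valid only in $C$) nor the inner semi-pure uniqueness (whose hypotheses require the domain to be $X_2\times X_3$, not the full triple product) delivers this.

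The fix is to drop the complementarity detour and verify that equality directly. Using the immediately-checked identities $(w\times\id)\circ(v\ltimes f)=(w\circ v)\ltimes f$ and $(v\ltimes f)\circ(u\times\id)=(v\circ u)\ltimes f$ for pure $u,w$ (both one-line consequences of the uniqueness clause, or instances of theorem~\ref{thm:functor}), together with the purely Cartesian identities relating $a$ to projections, one computes $r_{23}\circ h$ down to $(v_2\ltimes f_3)\circ p'_{23}$ on the nose. The consistency clause $r_1\circ h\cons v_1\circ p'_1$ is immediate from pure replacement and substitution. That is the ``straightforward'' check the paper has in mind, and it never needs complementarity; the bookkeeping you anticipate is the only real work.
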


\begin{thm}
\label{thm:monoidal} 
In a Cartesian effect category,
for all $f:X\to Y$, $f_1:X_1\to Y_1$, $f_2:X_2\to Y_2$ and $f_3:X_3\to Y_3$: 
$$ \left\{ \begin{array}{l}
  r_Y \circ (\id_{\uno} \ltimes f) = f \circ r_X \\ 
  l_Y \circ (f \rtimes \id_{\uno})  = f \circ l_X \\
  c_Y \circ (f_1\rtimes f_2) =  (f_2 \ltimes f_1) \circ c_X \\ 
  a_Y \circ (f_1\ltimes (f_2\ltimes f_3)) = ((f_1 \ltimes f_2)\ltimes f_3) \circ a_X \\
  a_Y \circ (f_1\rtimes (f_2\rtimes f_3)) = ((f_1 \rtimes f_2)\rtimes f_3) \circ a_X \\ 
  \end{array} \right. $$
\end{thm}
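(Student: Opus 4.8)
The plan is to reduce each of the five naturality equations to the three special cases established in Lemma~\ref{lem:monoidal}, using the definition of the sequential products as composites of semi-pure products (Definition~\ref{defn:sequential}) together with the centrality of identities and pure morphisms (Remark~\ref{rem:central-id}, Theorem~\ref{thm:central}), the functoriality of the semi-pure products (Theorem~\ref{thm:functor}, Lemmas~\ref{lem:functor-idcompose} and~\ref{lem:functor-compose}), and Proposition~\ref{prop:semipure-id}. The first two equations, for $r$ and $l$, are the easiest: writing $\id_{\uno}\ltimes f = (\id_{\uno}\ltimes f)\circ(\id_{\uno}\rtimes\id_X)$ is trivially $\id_{\uno}\ltimes f$ since $\id_{\uno}$ is pure hence central, so these are just the $r$-- and $l$--components of ordinary naturality for the monoidal unit on $C$ extended along $\ltimes$; concretely they follow by projecting and using the semi-pure product property, exactly as in the proof of Proposition~\ref{prop:sequential}. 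The swap equation $c_Y\circ(f_1\rtimes f_2) = (f_2\ltimes f_1)\circ c_X$ is nothing but a restatement of Proposition~\ref{prop:sequential-swap} (apply $c_{Y_2,Y_1}^{-1}=c_{Y_1,Y_2}$ on the left and use $c_{X_2,X_1}\circ c_{X_1,X_2}=\id$), so no new work is needed there.

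The real content is in the two associativity equations. For the right-hand one, $a_Y\circ(f_1\rtimes(f_2\rtimes f_3)) = ((f_1\rtimes f_2)\rtimes f_3)\circ a_X$, I would expand each $\rtimes$ of two arbitrary morphisms via Definition~\ref{defn:sequential} into a composite of a pure-on-the-right semi-pure product (which is a genuine $\times$ of a morphism with an identity) and an $\id\ltimes(-)$ factor, then slide the pure/identity factors past one another using centrality and the functoriality lemmas. Each elementary slide is an instance of one of the three identities in Lemma~\ref{lem:monoidal} (the $a$-naturality square for a morphism against two pure arrows, in one of the three positions). Chasing a suitably drawn commuting diagram — three horizontal ``layers'' for the three factors, with $a$ connecting the top and bottom rows — assembles these instances into the desired equation. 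The left associativity equation is handled the same way, or alternatively derived from the right one by conjugating with the appropriate swap isomorphisms using the already-proved $c$-equation and the coherence between $a$ and $c$ in $C$.

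The main obstacle I expect is bookkeeping: writing $f_i\ltimes f_j$ and $f_i\rtimes f_j$ out as four-fold composites of semi-pure products and then tracking which factors are central (identities or pure) so that the commutations are licensed. In particular one must be careful that the instances of Lemma~\ref{lem:monoidal} being invoked really do have the pure/arbitrary pattern that the lemma requires — the lemma only covers the cases with at most one non-pure argument in each position, so the decomposition must be arranged so that every elementary step falls into one of those three patterns. Once the decomposition is fixed, the verification is a routine, if lengthy, diagram chase of the kind already carried out in the proofs of Proposition~\ref{prop:sequential} and Lemma~\ref{lem:functor-idcompose}, and I would present it as such rather than spelling out every intermediate equality.
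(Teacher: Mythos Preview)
Your proposal is correct and follows essentially the same route as the paper: the $r$ and $l$ equations come directly from the semi-pure product property (since $r_X,l_X$ are projections), the $c$ equation is Proposition~\ref{prop:sequential-swap}, and the associativity equations are obtained by expanding via Definition~\ref{defn:sequential}, using Lemma~\ref{lem:functor-idcompose} (and its symmetric analogue) to rewrite each side as a composite of three semi-pure factors with at most one non-pure argument, and then applying the three cases of Lemma~\ref{lem:monoidal}. The paper treats the left associativity explicitly and leaves the right one as symmetric, whereas you sketch the right one first; your alternative of deriving one from the other via swap conjugation would also work but is not what the paper does.
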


\begin{proof}
Since $r_X$ and $l_X$ are the projections, 
the first two lines comes from the definition of semi-pure products.
Since $c_X$ is the swap morphism from section~\ref{subsec:binary},
the third line is proposition~\ref{prop:sequential-swap}. 
As for the fourth line, let us use the definition of sequential products:
$$ f_1\ltimes (f_2\ltimes f_3) =
(\id \ltimes  (f_2\ltimes f_3)) \circ (f_1 \rtimes \id) 
\et  f_2\ltimes f_3 = (\id \ltimes f_3) \circ (f_2 \rtimes \id)$$
hence by lemma~\ref{lem:functor-idcompose}: 
$$ \id \ltimes  (f_2\ltimes f_3) = 
  (\id \ltimes  (\id \ltimes f_3)) \circ (\id \ltimes (f_2 \rtimes \id)) $$
and finally:
$$ f_1\ltimes (f_2\ltimes f_3) =
    (\id \ltimes  (\id \ltimes f_3)) \circ (\id \ltimes (f_2 \rtimes \id)) \circ (f_1 \rtimes \id) \;.$$
In a symmetric way:
$$ (f_1\ltimes f_2)\ltimes f_3 =
  (\id \ltimes  f_3) \circ ((\id \ltimes f_2) \rtimes \id) \circ ((f_1 \rtimes \id) \rtimes \id) \;. $$
Hence the result follows from the three lines of lemma~\ref{lem:monoidal}, 
together with proposition~\ref{prop:semipure-id} for dealing with identities.
\end{proof}

\subsection{The sequential product properties} 
\label{subsec:sequential-direct}

Sequential products also satisfy the 
\emph{left and right sequential product properties},
as defined below,
which generalize the binary product property. 
We use an extended consistency $\Cons$, as defined in 
section~\ref{subsec:effect-extended}.

\begin{defn}
\label{defn:sequential-direct}
Let $(C\subseteqq K,\cons)$ be an effect category
with an extended consistency relation $\Cons$ and with a pair of
graph homomorphisms $\ltimes',\rtimes':K^2\to K$ extending $\times$.
Then
the \emph{left sequential product property} states that 
for all $f_1:X_1\to Y_1$ and $f_2:X_2\to Y_2$, 
the morphism $f_1\ltimes' f_2:X_1\times X_2 \to Y_1\times Y_2$ 
satisfies:
  $$ \begin{array}{cc}
   \xymatrix@R=.5pc{ \\  \\ 
   \txt{ $q_1\circ (f_1\ltimes' f_2) \Cons f_1\circ p_1$} \\ 
   \txt{ $q_2\circ (f_1\ltimes' f_2) = f_2\circ r_2\circ (f_1\rtimes' \id_{X_2})$ } \\  }
  \quad 
  & 
\xymatrix@R=1pc@C=2.5pc{
X_1 \ar[rr]^{f_1} \ar@{}[rrdd]|{\Consdown} && Y_1 \\
&& \\
X_1\times X_2 \ar[rr]^{f_1\ltimes' f_2} \ar@{~>}[uu]^{p_1} \ar@/_3ex/[rd]_{f_1\rtimes' \id} && 
  Y_1\times Y_2 \ar@{~>}[uu]_{q_1} \ar@{~>}[dd]^{q_2} \\
& Y_1\times X_2 \ar@{~>}[d]^{r_2} & \\
& X_2 \ar[r]^{f_2} & Y_2 \ar@{}[luu]|(.6){=} \\
  } \\
  \end{array} $$
Symmetrically, 
the \emph{right sequential product property} says that   
for all $f_1:X_1\to Y_1$ and $f_2:X_2\to Y_2$, 
the morphism $f_1\rtimes' f_2:X_1\times X_2 \to Y_1\times Y_2$ 
satisfies:
  $$ \begin{array}{cc}
   \xymatrix@R=.5pc{ \\  \\ 
   \txt{ $q_1\circ (f_1\rtimes' f_2) = f_1\circ s_1\circ (\id_{X_1} \ltimes' f_2)$} \\ 
   \txt{ $q_2\circ (f_1\rtimes' f_2) \Cons f_2\circ p_2$ } \\  }
  \quad 
  & 
  \xymatrix@R=1pc@C=2.5pc{
& X_1 \ar@{}[rdd]|(.4){=} \ar[r]^{f_1} & Y_1 \\
& X_1\times Y_2 \ar@{~>}[u]_{s_1} & \\
X_1\times X_2 \ar[rr]^{f_1\rtimes' f_2} \ar@{~>}[dd]_{p_2} \ar@/^3ex/[ru]^{\id \ltimes' f_2} && 
  Y_1\times Y_2 \ar@{~>}[uu]_{q_1} \ar@{~>}[dd]^{q_2} \\
&& \\
X_2 \ar[rr]^{f_2} & & Y_2  \ar@{}[lluu]|{\Consup}\\
  } \\
  \end{array} $$
\end{defn}

\begin{prop}
\label{prop:sequential-direct}
In a Cartesian effect category,
the sequential products $\ltimes,\rtimes$ satisfy the sequential product properties.
\end{prop}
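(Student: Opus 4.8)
The plan is to prove the two sequential product properties by unfolding the definition of $\ltimesseq$ (Definition~\ref{defn:sequential}) and reducing everything to the semi-pure product property plus the already-established extended-consistency machinery of Section~\ref{subsec:effect-extended}. I only treat the left sequential product property; the right one then follows by the swap symmetry of Proposition~\ref{prop:sequential-swap}, exactly as in the proofs of the preceding results.

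First I would dispatch the \emph{second equality}, $q_2\circ(f_1\ltimes f_2)=f_2\circ r_2\circ(f_1\rtimes\id_{X_2})$. By Definition~\ref{defn:sequential}, $f_1\ltimes f_2=(\id_{Y_1}\ltimessp f_2)\circ(f_1\rtimessp\id_{X_2})$, so $q_2\circ(f_1\ltimes f_2)=\bigl(q_2\circ(\id_{Y_1}\ltimessp f_2)\bigr)\circ(f_1\rtimessp\id_{X_2})$, and the left semi-pure product property for $\id_{Y_1}\ltimessp f_2$ gives $q_2\circ(\id_{Y_1}\ltimessp f_2)=f_2\circ r_2$; composing with $f_1\rtimessp\id_{X_2}$ yields the claim verbatim. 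This step is pure bookkeeping.

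The \emph{first part}, $q_1\circ(f_1\ltimes f_2)\Cons f_1\circ p_1$, is the substantive one. Again writing $f_1\ltimes f_2=(\id_{Y_1}\ltimessp f_2)\circ(f_1\rtimessp\id_{X_2})$, the left semi-pure product property for $\id_{Y_1}\ltimessp f_2$ gives $q_1\circ(\id_{Y_1}\ltimessp f_2)\cons r_1$, and since $r_1$ is pure, the extension axiom of $\Cons$ upgrades this to $q_1\circ(\id_{Y_1}\ltimessp f_2)\Cons r_1$. Now I apply the substitution axiom of $\Cons$ (post-composing on the right with $f_1\rtimessp\id_{X_2}$) to obtain $q_1\circ(f_1\ltimes f_2)\Cons r_1\circ(f_1\rtimessp\id_{X_2})$. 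Finally, the right semi-pure product property for $f_1\rtimessp\id_{X_2}$ says $r_1\circ(f_1\rtimessp\id_{X_2})=f_1\circ p_1$ (a genuine equality, since the first leg of a right semi-pure product is exact), and substituting this into the previous line gives $q_1\circ(f_1\ltimes f_2)\Cons f_1\circ p_1$, as required.

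The main obstacle, such as it is, is making sure we use the \emph{asymmetry} of the two semi-pure products correctly: the consistency leg of $\ltimessp$ is on the first projection and the exact leg on the second, whereas for $\rtimessp$ it is the reverse; writing $f_1\ltimes f_2$ as $(\id\ltimessp f_2)\circ(f_1\rtimessp\id)$ places the one non-exact leg we need to track ($q_1$) entirely inside the $\ltimessp$ factor, while the $\rtimessp$ factor contributes only equalities. One should also note that the argument works with \emph{any} extended consistency $\Cons$, since it uses only the extension and substitution axioms of Definition~\ref{defn:extended}; no appeal to complementarity or to the specific smallest $\Cons$ of Remark~\ref{rem:extended} is needed. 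The right sequential product property is obtained by conjugating the left one with the swap isomorphisms via Proposition~\ref{prop:sequential-swap} and the naturality of $c$ with respect to $\rtimes$ (Theorem~\ref{thm:monoidal}), together with the fact that $\Cons$, being closed under substitution, is preserved by pre- and post-composition with the isomorphisms $c_X$ and $c_Y$.
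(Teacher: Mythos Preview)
Your argument is correct and follows the paper's proof essentially step for step: unfold $f_1\ltimes f_2=(\id\ltimessp f_2)\circ(f_1\rtimessp\id)$, use the left semi-pure property to get $q_1\circ(\id\ltimessp f_2)\cons r_1$ and $q_2\circ(\id\ltimessp f_2)=f_2\circ r_2$, upgrade $\cons$ to $\Cons$ via the extension axiom, push both through $(f_1\rtimessp\id)$ by substitution, and finish with $r_1\circ(f_1\rtimessp\id)=f_1\circ p_1$ from the right semi-pure property. The only cosmetic difference is that the paper handles the right case by the obvious mirror argument rather than via the swap; your swap route also works, and in fact only needs substitution (pre-composition with $c_{X_1,X_2}$), since $q_2\circ c_{Y_2,Y_1}=q'_1$ absorbs the ``post-composition'' before you ever invoke $\Cons$.
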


\begin{proof}
The left sequential product is defined as 
$ f_1\ltimes f_2 = (\id_{Y_1} \ltimessp f_2) \circ (f_1\rtimessp \id_{X_2}) $.
Since $\Cons$ extends $\cons$, 
the left semi-pure product property yields: 
  $$q_1\circ (\id_{Y_1} \ltimessp f_2) \Cons r_1 
  \et 
  q_2\circ (\id_{Y_1} \ltimessp f_2) = f_2\circ r_2 $$
so that by the substitution property of $\Cons$: 
  $$ q_1\circ (f_1\ltimes f_2) \Cons r_1 \circ (f_1\rtimessp \id_{X_2})  
  \et 
  q_2\circ (f_1\ltimes f_2) = f_2 \circ r_2  \circ (f_1\rtimessp \id_{X_2}) \;.$$
The right semi-pure product  property implies that 
$ r_1\circ (f_1\rtimessp \id_{X_2}) = f_1\circ p_1 $,
hence:
  $$ q_1\circ (f_1\ltimes f_2) \Cons f_1\circ p_1 
  \et 
  q_2\circ (f_1\ltimes f_2) = f_2 \circ r_2  \circ (f_1\rtimes \id_{X_2}) $$
which is the left sequential product property.
\end{proof}

\begin{rem}
\label{rem:sequential-direct-unicity}
The following condition is called the \emph{extended unicity} condition: 
$$ \forall h,h':X\to Y_1\times Y_2 \mysep
\formula{ (q_1\circ h \lrCons q_1\circ h') \,\wedge\, (q_2\circ h = q_2\circ h')
\implies h=h' } $$
Since $\lrCons$ is weaker than $\lrcons$, 
the extended unicity condition implies the unicity condition 
of remark~\ref{rem:semipure-unicity}. 
Whenever the extended unicity condition holds, 
the sequential product properties can be used as a definition 
of the sequential products, instead of definition~\ref{defn:sequential}.
In addition, although this looks like a mutually recursive definition of the  
left and right sequential products, this recursivity has only two steps.

Indeed, let $\ltimes,\rtimes$ be the sequential products
and let $f_1:X_1\to Y_1$ and $f_2:X_2\to Y_2$.
First let $h=f_1\rtimes \id_{X_2}$. 
The right semi-pure product property states that 
$q_1\circ h = f_1\circ p_1$ and $q_2\circ h \cons v_2\circ p_2$,
thanks to the unicity condition this is a characterization of $h$.
Now let $k=f_1\ltimes f_2$,
from proposition~\ref{prop:sequential-direct} we get
$q_1\circ k \Cons f_1\circ p_1$ and $q_2\circ k = f_2\circ r_2\circ h$,
and thanks to the extended unicity condition this is a characterization of $k$.
\end{rem}

\subsection{Some examples of Cartesian effect categories} 
\label{subsec:cartesian-exam}

In this section and in section~\ref{subsec:monad-exam}
we check that the effect categories 
from section~\ref{subsec:effect-exam} can be seen as Cartesian effect categories.
In each example, for any pure morphism $v$ and morphism $f$
we build a morphism $v\ltimes f$, and it is left as an exercise to check that 
$v\ltimes f$ actually is the left semi-pure product of $v$ and $f$.
In addition, it happens that the extended unicity condition is satisfied, 
so that the sequential products are characterized by the sequential product properties.

\exam{Errors} 
According to \citep{CarLacWal93}, an extensive category with products 
is  distributive. So, in the category $C_0$,
for all $X$, $Y$, $Z$ 
the canonical map from $X\times Y + X\times Z$ to $X\times (Y+Z)$ is an isomorphism. 
Let $v=J(v_0):X_1\rpto Y_1$ and $f:X_2\to Y_2$ in $K$,
so that by distributivity $X_1\times X_2$ is isomorphic to 
$(X_1\times \cD{\kl{f}})+(X_1\times \olcD{\kl{f}})$.
We define $v\ltimes f:X_1\times X_2 \to Y_1\times Y_2$ by 
$\cD{\kl{v\ltimes f}}=X_1\times \cD{\kl{f}}$, 
$\olcD{\kl{v\ltimes f}}=X_1\times \olcD{\kl{f}}$, 
$\kl{v\ltimes f}_Y=v_0\times \kl{f}_Y$ and $\kl{v\ltimes f}_E=\kl{f}_E \circ \pi$, 
where $\pi:X_1\times \olcD{\kl{f}}\to\olcD{\kl{f}}$ is the projection.
$$ \xymatrix@C=4pc@R=1pc{ 
  \cD{\kl{f}} \ar[r]^{\kl{f}_Y} \ar[d]  &  Y_2 \ar[d]  \\
  X_2 \ar[r]^{\kl{f}} \ar@{}[rd]|{=} \ar@{}[ru]|{=} & 
    Y_2 + E  \\
  \olcD{\kl{f}} \ar[r]^{\kl{f}_E} \ar[u] & E \ar[u] \\
  } \qquad\qquad 
  \xymatrix@C=4pc@R=1pc{ 
  X_1\times \cD{\kl{f}} \ar[r]^{v_0\times \kl{f}_Y} \ar[d]  &  Y_1\times Y_2 \ar[d]  \\
  X_1\times X_2 \ar[r]^{\kl{v\ltimes f}} \ar@{}[rd]|{=} \ar@{}[ru]|{=} & 
    Y_1\times Y_2 + E  \\
  X_1\times \olcD{\kl{f}} \ar[r]^{\kl{f}_E \circ \pi} \ar[u] & E \ar[u] \\
  } $$

On sets, as expected, this provides the left sequential product: 
$ \forall x_1 \in X_1 \mysep \forall x_2 \in X_2 \mysep $
$$ 
(f_1\ltimes f_2) (x_1,x_2) =  
\left\{\begin{array}{ll}
\tuple{\kl{f_1}(x_1),\kl{f_2}(x_2)} & \mbox{ if } \kl{f_1}(x_1) \in Y_1 \et  \kl{f_2}(x_2) \in Y_2 \\
\kl{f_2}(x_2) &  \mbox{ if } \kl{f_1}(x_1) \in Y_1 \et  \kl{f_2}(x_2) \in E \\ 
\kl{f_1}(x_1) & \mbox{ if } \kl{f_1}(x_1) \in E \\
\end{array}\right. $$
When $E$ has one element all morphisms are central, but
as soon as $E$ has more than one element there are non-central morphisms.

\exam{Partiality} 
Given a category of partial morphisms, 
if we impose the existence of sequential products and the fact that all morphisms are central, 
then we get a notion that is rather similar to  
the notion of \emph{partial Cartesian} category of partial morphisms in \citep{CurObt89}. 

On sets, up to adjunction, the left sequential product is the same as for the monad $X+1$: 
  $ \cD{(f_1\ltimes f_2)} = \cD{f_1} \ltimes \cD{f_2}$ and  
  $$ \forall x_1 \in \cD{f_1} \mysep \forall x_2 \in \cD{f_2} \mysep
  (f_1\ltimes f_2) (x_1,x_2) = \tuple{\kl{f_1}(x_1),\kl{f_2}(x_2)} \;.$$

\exam{State}
Let $v=J(v_0):X_1\rpto Y_1$ and $f:X_2\to Y_2$ in $K$.
Let us define $v\ltimes f:X_1\times X_2 \to Y_1\times Y_2$,
up to the relevant commutations, by 
$\kl{v\ltimes f} = v_0 \times \kl{f} : S\times X_1\times Y_1 \to S\times Y_1\times Y_2$.
$$ \xymatrix@C=4pc@R=1pc{ 
  X_1 \ar[r]^{v_0}  &  Y_1  \\
  S\times X_1\times X_2 \ar[r]^{\kl{v\ltimes f}}  \ar[u] \ar[d] \ar@{}[rd]|{=} \ar@{}[ru]|{=} & 
    S\times Y_1\times Y_2 \ar[u] \ar[d] \\
  S\times X_2 \ar[r]^{\kl{f}} & S\times Y_2 \\
  } $$

On sets, as expected, this provides the left sequential product: 
$$ \forall x_1 \in X_1 \mysep \forall x_2 \in X_2 \mysep \forall s\in S  \mysep 
  \kl{f_1\ltimes f_2}  (s,x_1,x_2) = \tuple{s_2,y_1,y_2} $$
where $\kl{f_1}(s,x_1) = \tuple{s_1,y_1}$ and $\kl{f_2}(s_1,x_2) = \tuple{s_2,y_2}$.
The left sequential product $f_1\ltimes f_2$ 
is usually distinct from the right sequential product $f_1\rtimes f_2$.

\section{Comparisons}
\label{sec:compare}

The use of \emph{strong monads} for dealing with computational effects 
has been introduced by Moggi for reasoning about programs
\citep{Moggi89,Moggi91,Wadler92}. 
This has been generalized by Power and Robinson,
who defined \emph{Freyd-categories} and proved that a strong monad 
is equivalent to a Freyd-category with an adjunction 
\citep{PowerRobinson97,PowerThielecke99}.
Independently, Arrows have been introduced by Hughes for generalizing 
strong monads in Haskell \citep{Hughes00, Paterson01};
it was believed that Arrows are ``essentially'' equivalent to Freyd-categories,
until Atkey proved that Arrows are in fact more general than Freyd categories \citep{Atkey08}.
In this section we directly compare each of these three frameworks 
to Cartesian effect categories:
Freyd-categories in section~\ref{subsec:freyd},
Arrows in section~\ref{subsec:arrows} and
strong monads in section~\ref{subsec:monads}.
Examples are considered in section~\ref{subsec:monad-exam}.

\subsection{Freyd-categories}
\label{subsec:freyd}

In this section, it is proved that Cartesian effect categories are 
Freyd-categories \citep{PowerRobinson97,PowerThielecke99,Selinger01}.
Let $|K|$ denote the smallest wide subcategory of $K$,
made of the objects and identities of $K$.

\begin{defn} 
\label{defn:binoidal}
A \emph{binoidal category} is a category $K$ together with  
two functors $\otimes:|K|\times K \to K$ and $\otimes:K\times |K| \to K$ 
which coincide on $|K|^2$
(so that the notation $\otimes$ is not ambiguous). 
The functors $\otimes$ can be extended as two graph homomorphisms 
$\ltimesfr,\rtimesfr:K^2\to K$, as follows. 
For all $f_1:X_1\to Y_1$ and $f_2:X_2\to Y_2$ in $K$, let:
$$ \left\{ \begin{array}{l} 
  f_1\ltimesfr f_2 = (\id_{Y_1} \otimes f_2) \circ (f_1\otimes\id_{X_2}) : 
 X_1\otimes X_2\to Y_1\otimes Y_2 \\ 
  f_1\rtimesfr f_2 = (f_1\otimes \id_{Y_2}) \circ (\id_{X_1} \otimes f_2) : 
 X_1\otimes X_2\to Y_1\otimes Y_2 \\
  \end{array} \right. $$
A morphism $k_1:X_1\to Y_1$ is \emph{central} if for all $f_2:X_2\to Y_2$, 
$k_1\ltimesfr f_2 = k_1\rtimesfr f_2$ and symmetrically $f_2\ltimesfr k_1 = f_2\rtimesfr k_1$.
Let $t:\Phi\To\Psi$ be a natural transformation between two functors $\Phi,\Psi:K'\to K$, 
then $t$ is \emph{central} if every component of $t$ is central.
\end{defn}
In theorem~\ref{thm:freyd} the graph homomorphisms $\ltimesfr,\rtimesfr$ 
will be related to the sequential products $\ltimes,\rtimes$ from section~\ref{sec:cartesian}.
In the next definition, ``natural'' means natural in each component separately. 

\begin{defn}
\label{defn:premonoidal}
A \emph{symmetric premonoidal category} is a binoidal category $K$ together with 
an object $I$ of $K$ and central natural isomorphisms with components
$a_{X,Y,Z}:(X\otimes Y)\otimes Z \to X\otimes (Y\otimes Z)$, 
$l_X:X\otimes I\to X$, 
$r_X:I\otimes X\to X$  and 
$c_{X,Y}:X\otimes Y \to X\otimes Y$,
subject to the usual coherence equations for symmetric monoidal categories 
\citep{Maclane97}.
Note that every symmetric monoidal category,
hence every category with finite products, is symmetric premonoidal.
A \emph{symmetric premonoidal functor} between two 
symmetric premonoidal categories is a functor that preserves 
the partial functor $\otimes$, the object $I$ and the natural isomorphisms $a,l,r,c$. 
It is \emph{strict} if in addition it maps central morphisms to central morphisms.
A \emph{Freyd-category} is an identity-on-objects functor $J:C\to K$ where 
the category $C$ has finite products, 
the category $K$ is symmetric premonoidal 
and the functor $J$ is strict symmetric premonoidal. 
\end{defn}

The following result states that every Cartesian effect category is a Freyd-category.
It is an easy consequence of the results in section~\ref{sec:cartesian}.

\begin{thm}
\label{thm:freyd}
Let $C\subseteqq K$ be a Cartesian effect category.
Let $a,l,r,c$ be the natural isomorphisms on $C$ defined as in section~\ref{subsec:natural}.
Let $J:C\to K$ be the inclusion, let $\otimes:|K|\times K \to K$ and $\otimes:K\times|K| \to K$
be the restrictions of $\ltimes$ and $\rtimes$, respectively, and let $I=\uno$.
This forms a Freyd-category, where $\ltimesfr$ and $\rtimesfr$
coincide with $\ltimes$ and $\rtimes$, respectively.
\end{thm}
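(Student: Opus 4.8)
The plan is to verify each clause of Definition~\ref{defn:premonoidal} and Definition~\ref{defn:binoidal} directly from the results already assembled in section~\ref{sec:cartesian}, so that almost no new work is needed. First I would check that $(K,\otimes,\otimes)$ is a binoidal category. By Theorem~\ref{thm:functor} the restrictions $\ltimes:C_K\times K\to K$ and $\rtimes:K\times C_K\to K$ are functors, and since $|K|$ is a subcategory of $C_K$ (the identities are central by Remark~\ref{rem:central-id}), the further restrictions $\otimes:|K|\times K\to K$ and $\otimes:K\times|K|\to K$ are functors. They agree on $|K|^2$ by Proposition~\ref{prop:semipure-id}, so the notation is unambiguous. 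Next I would identify the two graph homomorphisms $\ltimesfr,\rtimesfr$ built from $\otimes$ in Definition~\ref{defn:binoidal} with the sequential products $\ltimes,\rtimes$: this is immediate by comparing the formula $f_1\ltimesfr f_2 = (\id_{Y_1}\otimes f_2)\circ(f_1\otimes\id_{X_2})$ with Definition~\ref{defn:sequential}, since $\otimes$ restricted to $|K|\times K$ is $\id\ltimes(-)$ and restricted to $K\times|K|$ is $(-)\rtimes\id$ (using Proposition~\ref{prop:sequential} to know that $\ltimes$ on a pure, hence central, first argument is the semi-pure product). In particular the notion of central morphism from Definition~\ref{defn:binoidal} coincides with that of Definition~\ref{defn:central}.

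Then I would equip $K$ with the premonoidal structure. Take $I=\uno$ and the components $a,l,r,c$ as defined on $C$ in section~\ref{subsec:natural}. These are isomorphisms in $C$, hence in $K$, and they are pure, hence central by Theorem~\ref{thm:central}. The naturality of $a,l,r,c$ with respect to $\ltimes$ and $\rtimes$ — i.e.\ naturality in each component separately for the binoidal structure — is exactly the content of Theorem~\ref{thm:monoidal}: the five displayed equations there are precisely the premonoidal naturality squares for $r$, $l$, $c$ and the two orientations of $a$. The coherence equations for a symmetric monoidal category are already satisfied by $a,l,r,c$ as the symmetric monoidal structure on the Cartesian category $C$, and since they are equations between pure morphisms they hold a fortiori in $K$. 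This shows $K$ is symmetric premonoidal.

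Finally I would check that $J:C\to K$ is a strict symmetric premonoidal functor. It is identity-on-objects by construction ($C$ is wide in $K$). It preserves $\otimes$ because, by Theorem~\ref{thm:central} and Proposition~\ref{prop:sequential}, for pure $v_1,v_2$ the products $v_1\ltimes v_2$, $v_1\rtimes v_2$ and the Cartesian product $v_1\times v_2$ all coincide, so $J(v_1\times v_2) = J(v_1)\otimes J(v_2)$; it preserves $I=\uno$ and the isomorphisms $a,l,r,c$ since these are literally the same morphisms in $C$ and in $K$; and it is strict because every morphism of $C$ is pure, hence central in $K$ by Theorem~\ref{thm:central}, so $J$ trivially maps central morphisms to central morphisms. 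Assembling these observations gives the Freyd-category, with $\ltimesfr=\ltimes$ and $\rtimesfr=\rtimes$ as noted above.

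I do not expect a genuine obstacle here: every ingredient has been prepared in section~\ref{sec:cartesian}. The only point requiring a little care is bookkeeping — making sure that the ``central'' of Definition~\ref{defn:binoidal} is literally the ``central'' of Definition~\ref{defn:central} (so that Theorem~\ref{thm:central} applies to show $a,l,r,c$ are central and $J$ is strict), and that the five equations of Theorem~\ref{thm:monoidal} are matched correctly to the premonoidal naturality conditions, including checking that naturality in the remaining component of each of $a,l,r$ follows by combining a $\ltimes$-equation and a $\rtimes$-equation via Proposition~\ref{prop:sequential-swap}. This is routine diagram bookkeeping rather than a real difficulty.
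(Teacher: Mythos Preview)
Your proposal is correct and follows essentially the same route as the paper: verify binoidality via Theorem~\ref{thm:functor}, identify $\ltimesfr,\rtimesfr$ with $\ltimes,\rtimes$ from Definition~\ref{defn:sequential}, obtain centrality of $a,l,r,c$ from Theorem~\ref{thm:central}, naturality from Theorem~\ref{thm:monoidal}, coherence from $C$, and strictness of $J$ again from Theorem~\ref{thm:central}. The only refinement the paper adds is that the premonoidal naturality of $a$ (in each variable separately) is literally Lemma~\ref{lem:monoidal} rather than the full Theorem~\ref{thm:monoidal}; your closing remark about bookkeeping covers this.
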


\begin{proof}
The graph homomorphisms $\otimes:|K|\times K \to K$ and $\otimes:K\times |K| \to K$
coincide on $|K|^2$, and they are functors by theorem~\ref{thm:functor}, 
hence $K$ with $\otimes$ is a binoidal category.
Then, definitions~\ref{defn:sequential} and~\ref{defn:binoidal} 
state that the graph homomorphisms $\ltimesfr,\rtimesfr$ 
are the sequential products $\ltimes,\rtimes$.
It follows that both notions of central morphism
(definitions~\ref{defn:central} and~\ref{defn:binoidal}) coincide. 
The fact that the transformations $a,l,r,c$ are natural, 
in the sense of symmetric premonoidal categories, 
is an immediate consequence of theorem~\ref{thm:monoidal}
(in fact for $a$ it is lemma~\ref{lem:monoidal}).
Since all the components of $a,l,r,c$ are defined from the symmetric monoidal category $C$,
we know that they are isomorphisms and that they satisfy the coherence equations. 
In addition, since all pure morphisms are central by theorem~\ref{thm:central}, 
it follows that $a,l,r,c$ are central.
Hence $K$ with $\otimes$, $I$ and $a,l,r,c$ is a symmetric premonoidal category.
Clearly the inclusion functor $J:C\to K$ is symmetric premonoidal, 
and it is strict because of theorem~\ref{thm:central}.
\end{proof}

\subsection{Arrows}
\label{subsec:arrows}
\renewcommand{\arraystretch}{1} 

In view of the similarities between Freyd-categories and Arrows, 
it can be guessed that every Cartesian effect category 
gives rise to an Arrow \citep{Hughes00, Paterson01}; 
this is stated in this section.

\begin{defn}
\label{defn:related-arr}
An \emph{Arrow type} is a binary type constructor $\tA$ of the form:
\\ $\begin{array}{l}
\quad \texttt{class Arrow $\tA$ where} \\
\quad \quad \arr :: (X\to Y)\to \tA\;X\;Y \\
\quad \quad (\acomp) :: \tA\;X\;Y \to \tA\;Y\;Z \to \tA\;X\;Z \\
\quad \quad \first :: \tA\;X\;Y \to \tA\;(X,Z)\;(Y,Z) \\
\end{array}$
\\ satisfying the following equations: 
\begin{center}
\begin{tabular}{crcl}
 (1) & $\arr\; \id \acomp f $ &=& $ f$ \\ 
 (2) & $f \acomp \arr\; \id $ &=& $ f$ \\ 
 (3) & $(f \acomp g) \acomp h $ &=& $ f \acomp (g \acomp h)$ \\ 
 (4) & $\arr\;(w.v) $ &=& $ \arr\; v \acomp \arr\; w$ \\ 
 (5) & $\first\; (\arr\; v) $ &=& $ \arr\; (v\times\id)$ \\
 (6) & $\first\;(f \acomp g) $ &=& $ \first\; f \acomp \first\; g$ \\ 
 (7) & $\first\; f \acomp \arr\; (\id\times v) $ &=& 
    $ \arr\; (\id\times v) \acomp \first\; f$ \\
 (8) & $\first\; f \acomp \arr\; \fst $ &=& $ \arr\; \fst \acomp f$ \\
 (9) & $\;\;\first\; (\first\; f) \acomp \arr\; \assoc $ &=& 
    $ \arr\; \assoc \acomp \first\; f$ \\
\end{tabular}
\end{center}
where the functions $(\times)$, $\fst$ and $\assoc$ are defined as: 
\\ $\begin{array}{l}
(\times) :: (X\to X')\to(Y\to Y')\to (X,Y)\to (X',Y') \,\mbox{ such that }\,
(f\times g)(x,y)=(f\,x,g\,y)  \\ 
\fst :: (X,Y)\to X \;\mbox{ such that }\;
\fst(x,y)=x \\
\assoc :: ((X,Y),Z)\to (X,(Y,Z)) \;\mbox{ such that }\;
\assoc((x,y),z) = (x,(y,z)) \\
\end{array}$
\end{defn}

Let $C_H$ denote the category of Haskell types and ordinary functions, 
so that the Haskell notation $\mathtt{(X\to Y)}$ represents $C_H(X,Y)$, 
made of the Haskell ordinary functions from $X$ to $Y$. 
An arrow $\tA$ constructs a type $\tA\;X\;Y$ for all types $X$ and $Y$.
We slightly modify the definition of Arrows 
by allowing $\mathtt{(X\to Y)}$ to represent $C(X,Y)$ for any 
Cartesian category $C$ 
and by requiring that $\tA\;X\;Y$ is a set rather than a type:
more on this issue can be found in \citep{Atkey08}.
In addition, we use categorical notations instead of Haskell syntax. 
For this reason, from now on, for any Cartesian category $C$, 
an \emph{Arrow $A$ on $ C$} associates to each objects 
$X$, $Y$ of $ C$ a set $A(X,Y)$, together with three operations: 
$\arr :  C(X,Y)\to A(X,Y) \mysep
\acomp :  A(X,Y) \to A(Y,Z) \to A(X,Z)  \mysep 
\first :  A(X,Y) \to A(X\times Z,Y\times Z)  \mysep$
that satisfy the equations (1)--(9).
Basically, the correspondence between a Cartesian effect category $C\subseteqq K$
and an Arrow $A$ on $ C$ 
identifies $ K(X,Y)$ with $A(X,Y)$ for all types $X$ and $Y$.
This is stated more precisely in proposition~\ref{prop:arrows}.

\begin{prop}
\label{prop:arrows}
Every Cartesian effect category $C\subseteqq K$ 
gives rise to an Arrow $A$ on $C$, according to the following table:
$$ \begin{array}{|l|l|}
\hline
\boxEff & \boxArr \\
\hline
K(X,Y) &
  A(X,Y) \\ 
C(X,Y)\subseteq K(X,Y) &
  \arr:  C(X,Y)\to A(X,Y) \\ 
f\mapsto (g\mapsto g\circ f) & 
  \acomp: A(X,Y) \to A(Y,Z) \to A(X,Z) \\
f\mapsto f\times\id & 
  \first: A(X,Y) \to A(X\times Z,Y\times Z)  \\ 
\hline
\end{array} $$
\end{prop}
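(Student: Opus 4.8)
The plan is to verify the nine Arrow equations one by one, using the definitions in the table together with the structural results already established in Section~\ref{sec:cartesian}. Under the proposed identification, $\arr$ is the inclusion $C(X,Y)\hookrightarrow K(X,Y)$, the operation $\acomp$ is composition in $K$ (with arguments in diagrammatic order), and $\first$ sends $f:X\to Y$ to $f\rtimes\id_Z:X\times Z\to Y\times Z$, using the right semi-pure product of Definition~\ref{defn:semipure}.

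First I would dispatch equations (1)--(4): these only concern $\arr$ and $\acomp$, and they say precisely that composition in $K$ is associative and unital and that $\arr$ is a functor from $C$ to $K$. Since $C$ is by hypothesis a wide subcategory of $K$, all four are immediate. Next, equations (5)--(9) involve $\first$. Equation (5), $\first(\arr\,v)=\arr(v\times\id)$, is exactly the statement that the right semi-pure product extends the binary product $\times$ on $C$, which is part of Definition~\ref{defn:semipure} (together with Proposition~\ref{prop:semipure-id} for the identity in the second slot). Equation (6), $\first(f\acomp g)=\first f\acomp\first g$, i.e.\ $(g\circ f)\rtimes\id_Z=(g\rtimes\id_Z)\circ(f\rtimes\id_Z)$, is the functoriality of $\rtimes$ in its first argument: this is Theorem~\ref{thm:functor} (the symmetric version of Lemma~\ref{lem:functor-compose}), noting that identities are central.

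Equations (7), (8) and (9) are the naturality/coherence conditions and should follow from Section~\ref{subsec:natural}. Equation (7), $\first f\acomp\arr(\id\times v)=\arr(\id\times v)\acomp\first f$, i.e.\ $(\id_Y\times v)\circ(f\rtimes\id)=(f\rtimes\id)\circ(\id_X\times v)$ for $v$ pure, asserts that the pure morphism $\id\times v$ commutes past $f\rtimes\id$; since $\id\times v = \id\rtimes v = \id\ltimes v$ is central by Theorem~\ref{thm:central} and $f\rtimes\id$ is itself a right semi-pure product, this is an instance of the interchange embodied in Definition~\ref{defn:central}/Remark~\ref{rem:central}, combined with functoriality of the semi-pure products. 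Equation (8), $\first f\acomp\arr\,\fst=\arr\,\fst\acomp f$, where $\fst=l$-type projection $Y\times Z\to Y$: here $\fst$ is a projection, hence pure, and $\fst\circ(f\rtimes\id)=f\circ\fst$ is precisely the first component of the right semi-pure product property, $q_1\circ(f\rtimes\id)=f\circ p_1$. Equation (9), $\first(\first f)\acomp\arr\,\assoc=\arr\,\assoc\acomp\first f$, i.e.\ $a_Y\circ((f\rtimes\id)\rtimes\id)=(f\rtimes(\id\rtimes\id))\circ a_X$, is the naturality of the associativity isomorphism $a$ in its first argument against $\rtimes$; this is exactly the fifth line of Theorem~\ref{thm:monoidal} specialized to $f_2=\id$, $f_3=\id$ (equivalently, the first line of Lemma~\ref{lem:monoidal} with $v_2=v_3=\id$).

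The main obstacle will be bookkeeping rather than mathematics: one must be careful that the Haskell-style left-to-right reading of $\acomp$ matches ordinary right-to-left composition in $K$ (as the table already encodes via $f\mapsto(g\mapsto g\circ f)$), that $\first$ uses the \emph{right} semi-pure product $f\rtimes\id$ rather than the left one, and that the projections $\fst$ and the reassociation $\assoc$ are identified with the structural maps $p_1$ and $a$ coming from the Cartesian structure on $C$ (so that they are genuinely pure, hence central, allowing the cited theorems to apply). Once these identifications are fixed, each of (1)--(9) reduces to a single already-proved statement, so no new calculation is required beyond tracking the conventions.
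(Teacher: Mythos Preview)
Your proposal is correct and follows essentially the same approach as the paper: both translate each Arrow law (1)--(9) into a statement about the Cartesian effect category and then discharge it by citing the appropriate prior result. The only cosmetic differences are that for (6) the paper invokes the symmetric form of Lemma~\ref{lem:functor-idcompose} directly while you appeal to the more general functoriality Theorem~\ref{thm:functor}, and for (7) the paper simply cites Theorem~\ref{thm:central} (using that $v$ is central, so $f\ltimes v=f\rtimes v$, which unfolds to the required interchange) whereas you phrase it via the centrality of $\id\times v$; both routes are valid and amount to the same thing.
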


\begin{proof}
The first and second line in the table say that $A(X,Y)$ is 
made of the morphisms from $X$ to $Y$ in $ K$
and that $\arr$ is the conversion from pure morphisms to arbitrary morphisms. 
The third and fourth lines say that 
$\acomp$ is the (reverse) composition of morphisms 
and that $\first$ is the right semi-pure product with the identity. 
Now we prove that $A$ is an Arrow 
by translating each property (1)--(9) in terms of Cartesian effect categories 
and giving the argument for its proof.
Note that $\fst$ is the common name for projections like $p_1,q_1,\dots$ 
(in section~\ref{sec:cartesian}) 
and that $\assoc$ is the natural isomorphism $a$ as in section~\ref{subsec:natural}.
\begin{center}
\begin{tabular}{crcll}
 (1) & $f\circ \id $ &=& $f$       & 
  identity in $ K$ \\ 
 (2) & $ \id \circ f $ &=& $f$    & 
  identity in $ K$ \\ 
 (3) & $h\circ (g\circ f) $ &=& $ (h\circ g)\circ f $ & 
  associativity in $ K$ \\ 
 (4) & $w\circ v$ in $ C$ &=& $w\circ v$ in $ K$ &
  $C\subseteq K$ is a functor \\ 
 (5) & $v\times\id $ in $ C$ &=& $ v\times\id$ in $ K$ & 
  $\times$ in $K$ extends $\times$ in $C$ \\   
 (6) & $(g\circ f)\times\id $ &=& $ (g\times\id)\circ (f\times\id)$ &
  lemma~\ref{lem:functor-idcompose} \\
 (7) & $(\id\times v) \circ (f\times\id) $ &=& $ (f\times\id) \circ (\id\times v)$ & 
  theorem~\ref{thm:central} \\
 (8) & $q_1\circ (f\times\id) $ &=& $ f\circ p_1$ &
  definition~\ref{defn:semipure} \\
 (9) & $a\circ ((f\times\id)\times\id) $ &=& $ (f\times\id)\circ a$ & 
  lemma~\ref{lem:monoidal} \\
\end{tabular}
\end{center}
\end{proof}

The Arrow combinators $\scond$, $(\altimes)$ and $(\atuple)$ 
can be derived from $\arr$, $(\acomp)$ and $\first$, see e.g \citep{Hughes00,Paterson01}.
The correspondence in proposition~\ref{prop:arrows} is easily extended
to these functions. 
The left pairing $\tuple{f_1,f_2}_l$ 
and the natural isomorphism $c$ (corresponding to $\swap$)
are defined in section~\ref{subsec:sequential} and~\ref{subsec:natural},
respectively.
\begin{center}
\begin{tabular}{|l|l|}
\hline
\boxEff & \boxArr \\
\hline
$ (id \times f) =  c \circ (f \times id)  \circ  c  $ &
  $\scond\;f  = \arr\;\swap \acomp \first\;f \acomp \arr\;\swap $ \\
$f_1 \ltimes f_2  =  (\id\times f_2) \circ (f_1\times\id)$ &
  $f_1 \altimes f_2 = \first\;f_1 \acomp \scond\;f_2 $ \\ 
$\tuple{f_1,f_2}_l = (f_1\ltimes f_2) \circ \tuple{\id,\id}$ &
  $f_1 \atuple f_2  = \arr(\lambda x \rightarrow (x,x)) \acomp (f_1 \altimes f_2) $ \\
\hline
\end{tabular} 
\end{center}

For instance in \citep[\S 4.1]{Hughes00} it is stated that 
$\atuple$ is not a categorical product since in general $f_1$
is different from $(f_1 \atuple f_2) \acomp \arr\;\fst$:
``\emph{there is no reason to expect Haskell's pair type, $\atuple$, to be a
categorical product in the category of arrows, or indeed to expect any
categorical product to exist}''.
We can state this more precisely in a Cartesian effect category,
where $(f_1 \atuple f_2) \acomp \arr\;\fst$ corresponds to $q_1 \circ \tuple{f_1,f_2}_l$. 
Indeed, both morphisms are consistent: it follows from 
proposition~\ref{prop:sequential-direct} and pure substitution that
$ q_1 \circ \tuple{f_1,f_2}_l  \Cons  f_1$.

\subsection{Strong monads}
\label{subsec:monads}

Strong monads correspond to Freyd-categories $J:C\to K$ with a right adjoint for $J$
\citep{PowerRobinson97},
while Cartesian effect categories correspond to Freyd-categories 
with a sequential product (theorem~\ref{thm:freyd}). 
In this section, we give a condition which characterizes 
the strong monads such that the corresponding Freyd-category 
is a \emph{weak} Cartesian effect category, 
which means that there are two 
graph homomorphisms  $\ltimessp:C\times K\to K$ and $\rtimessp:K\times C\to K$
which satisfy the left and right semi-pure product property respectively, 
but which may not be unique.

We use the same notations as in remark~\ref{rem:kl-pure}.
It has been seen in remark~\ref{rem:kl-effect}
that the effect of a morphism $f:X\to Y$ of $K$ 
stands for $\kl{\tu_Y\circ f}=M\tu_Y \circ \kl{f}: X \to M\uno$ in $C_0$,  
so that in $C_0$: 
  $$ \forall \ff:X\to MY \mysep \forall \ff':X\to MY' \mysep
     \formula{ \ff \seff_0 \ff' \iff M\tu_Y\circ \ff = M\tu_{Y'}\circ \ff' } \,.$$
Let $\cons$ be a consistency relation on $C\subseteqq K$,
then the relation $\cons_0$ in $C_0$ is defined by
$\kl{f} \cons_0 \kl{v} \iff f\cons v$, or equivalently:
  $$ \forall \ff,\ff':X\to MY \mbox{ in }C_0 \mysep 
  \formula{ \ff \cons_0 \ff' \iff 
  \exists v_0:X\to Y  \mbox{ in }C_0 \mysep (\ff'=\eta_Y\circ v_0) \,\wedge\,
  (\lk{\ff} \cons J(v_0) )} \,.$$ 
The pure substitution property of $\cons$ (proposition~\ref{prop:value}) 
corresponds to the following \ppt{substitution property of~$\cons_0$}: 
  $$ \forall v_0:X\to Y \mysep \forall w_0:Y\to Z  \mysep \forall \gg :Y\to MZ \mysep
  \formula{ \gg \cons_0 \eta_Z\circ w_0 \implies
  \gg \circ v_0 \cons_0 \eta_Z\circ w_0 \circ v_0}\,.$$

Now in addition let us assume that $C_0$, hence $C$, is Cartesian. 
In \citep{Moggi89},
it is explained why the monad $(M,\mu,\eta)$ and the product $\times$
are not sufficient for dealing with several variables:
there is a \emph{type mismatch} from $Y_1\times MY_2$ to $M(Y_1\times Y_2)$. 
This issue is solved by adding a \emph{strength},
i.e., a natural transformation $t$ with components 
$t_{Y_1,Y_2}:Y_1\stimes MY_2 \to M(Y_1\stimes Y_2)$
satisfying four axioms \citep{Moggi89}. 
One of these axioms is that for all $X$, 
$r_{MX} = Mr_X\circ t_{\uno,X}:\uno\stimes MX\to MX$, 
where the natural isomorphism $r$ is made of the projections
$r_X:\uno\stimes X\to X$ as in section~\ref{subsec:natural}.
Let us assume that we are given a strength $t$ for our monad.
In $K$, let $v:X_1\rpto Y_1$ and $f:X_2\to Y_2$;
in order to form a kind of product of $v$ and $f$, 
the usual method 
consists in composing in $C_0$ 
the product $v_0\times\kl{f}:X_1\stimes X_2 \to Y_1\stimes MY_2$
with the strength $t_{Y_1,Y_2}:Y_1\stimes MY_2 \to M(Y_1\stimes Y_2)$; 
we call this construction the \emph{left Kleisli product}.
The right Kleisli product is defined symmetrically.

\begin{defn}
\label{defn:monads-prod}
For all $v=J(v_0):X_1\rpto Y_1$ and $f:X_2\to Y_2$ in $K$, 
the \emph{left Kleisli product} of $v$ and $f$ in $K$ is defined by: 
  $$ \kl{v\ltimeskl f} = t_{Y_1,Y_2} \circ (v_0\times\kl{f}) : 
  X_1\times X_2 \to M(Y_1\times Y_2) \mbox{ in } C_0  \;.$$ 
\end{defn}

\begin{lem}
\label{lem:strength}
The strength can be expressed as a left Kleisli product:
  $$  \lk{t_{Y_1,Y_2}} = \id_{Y_1} \ltimeskl \,\lk{\id_{MY_2}} \mbox{ in } K  \;.$$
For all $Y_1,Y_2$, with projections $q_2:Y_1\times Y_2\rpto Y_2$ 
and $q'_2:Y_1\times MY_2 \rpto MY_2$: 
  $$ q_2 \circ \lk{t_{Y_1,Y_2}} = q'_2  \mbox{ in } K \;.$$
\end{lem}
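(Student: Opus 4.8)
The plan is to reduce both claims to identities in the base category $C_0$, using the translation dictionary of Remarks~\ref{rem:kl-pure} and~\ref{rem:kl-effect} between $K=K_M$ and $C_0$, and then to establish those identities from nothing more than Definition~\ref{defn:monads-prod}, the composition rules of a Kleisli category, the naturality of the strength $t$, and one of its coherence axioms. The first identity is a matter of unfolding definitions; the second is where the strength data does real work.

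For the first identity I would apply Definition~\ref{defn:monads-prod} with $v=\id_{Y_1}$, so that $v_0=\id_{Y_1}$, and $f=\lk{\id_{MY_2}}$, so that $\kl{f}=\id_{MY_2}$. This gives, in $C_0$,
$$ \kl{\id_{Y_1}\ltimeskl\lk{\id_{MY_2}}} \;=\; t_{Y_1,Y_2}\circ(\id_{Y_1}\times\id_{MY_2}) \;=\; t_{Y_1,Y_2} \;=\; \kl{\lk{t_{Y_1,Y_2}}} \,. $$
Since $\kl{\cdot}$ and $\lk{\cdot}$ are mutually inverse bijections between hom-sets and $t_{Y_1,Y_2}$ has codomain of the form $M(-)$, this equality of $C_0$-morphisms is exactly the asserted equality $\lk{t_{Y_1,Y_2}}=\id_{Y_1}\ltimeskl\lk{\id_{MY_2}}$ in $K$.

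For the second identity, $q_2$ is pure and stands for the product projection $q_{2,0}\colon Y_1\times Y_2\to Y_2$ of $C_0$, so the Kleisli composition rule for a pure morphism on the left gives $\kl{q_2\circ\lk{t_{Y_1,Y_2}}}=Mq_{2,0}\circ t_{Y_1,Y_2}$ in $C_0$. The key observation is that $q_{2,0}$ factors through the terminal object as $q_{2,0}=r_{Y_2}\circ(\tu_{Y_1}\times\id_{Y_2})$, with $\tu_{Y_1}\colon Y_1\to\uno$ the unique map and $r_{Y_2}\colon\uno\times Y_2\to Y_2$ the canonical projection (the component of the natural isomorphism $r$ from Section~\ref{subsec:natural}). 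Applying $M$, then naturality of $t$ in its first argument along $\tu_{Y_1}$, and finally the strength axiom $r_{MY_2}=Mr_{Y_2}\circ t_{\uno,Y_2}$ recalled above, one computes
$$ Mq_{2,0}\circ t_{Y_1,Y_2} = Mr_{Y_2}\circ M(\tu_{Y_1}\times\id_{Y_2})\circ t_{Y_1,Y_2} = Mr_{Y_2}\circ t_{\uno,Y_2}\circ(\tu_{Y_1}\times\id_{MY_2}) = r_{MY_2}\circ(\tu_{Y_1}\times\id_{MY_2}) \,, $$
and the last morphism is just the product projection $Y_1\times MY_2\to MY_2$ of $C_0$, i.e.\ the $C_0$-map for which $q'_2$ stands. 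Reading this back through the Kleisli correspondence yields $q_2\circ\lk{t_{Y_1,Y_2}}=q'_2$ in $K$.

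The hard part, modest as it is, is the middle step of the second identity: one must notice that the required identity is \emph{forced} by the strength, and moreover that only the naturality of $t$ together with the single coherence axiom tying $t$ to the unit $r$ are needed -- the remaining strength axioms play no role here. Everything else is bookkeeping with the correspondence between $K_M$ and $C_0$; the one routine fact to confirm along the way is the factorization $q_{2,0}=r_{Y_2}\circ(\tu_{Y_1}\times\id_{Y_2})$, which follows from the coherence of the Cartesian structure of $C_0$.
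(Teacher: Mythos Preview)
Your proof is correct and follows essentially the same approach as the paper's: for the first identity you unfold Definition~\ref{defn:monads-prod} with $v_0=\id_{Y_1}$ and $\kl{f}=\id_{MY_2}$, and for the second you translate to $C_0$, factor the projection $q_2$ through $\uno\times Y_2$, then invoke naturality of $t$ in its first argument together with the unit axiom $r_{MX}=Mr_X\circ t_{\uno,X}$. The paper presents the second computation as a commutative diagram rather than a chain of equalities, but the ingredients are identical.
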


\begin{proof}
In $K$, let $v=\id_{Y_1}:Y_1\rpto Y_1$ and $f=\lk{\id_{MY_2}}:MY_2\to Y_2$,
so that $v_0=\id_{Y_1}$ and $\kl{f}=\id_{MY_2}$ in~$C_0$.
Then $v_0\times\kl{f} = \id_{Y_1\times MY_2}$
so that $\kl{v\ltimeskl f} = t_{Y_1,Y_2}$, this is the first property. 
Now, for readability, we omit the subscript $0$ for naming the projections in $C_0$. 
The result is equivalent to $Mq_2 \circ t_{Y_1,Y_2} = q'_2$ in $C_0$.
The projection $q_2$ can be decomposed as 
$q_2=r_2\circ (\tu_{Y_1}\times Y_2) $, 
where $r_2=r_{Y_2}:\uno\times Y_2\to Y_2$ is the projection. 
Hence on the one hand $Mq_2=Mr_2 \circ M(\tu_{Y_1}\times Y_2) $,
and on the other hand $q'_2=r'_2\circ (\tu_{Y_1}\times MY_2) $
where $r'_2=r_{MY_2}:\uno \times MY_2\to MY_2$ is the projection. 
$$ \xymatrix@R=1.5pc{
  Y_1\times MY_2 \ar@/_12ex/[dd]_{q'_2}^{\quad =} \ar[d]_(.6){\tu\times M\id} 
    \ar[r]^{t_{Y_1,Y_2}} \ar@{}[rd]|{=} & 
    M(Y_1\times Y_2) \ar[d]^(.6){M(\tu\times\id)}  \ar@/^12ex/[dd]^{Mq_2}_{=\quad } \\
  \uno \times MY_2 \ar[d]_{r'_2} \ar[r]^{t_{\uno,Y_2}} \ar@{}[rd]|{=} & 
    M(\uno\times Y_2) \ar[d]^{Mr_2} \\
  MY_2 \ar[r]^{\id} & MY_2 \\
}$$
In the previous diagram, the square on the top is commutative 
since $t$ is natural, and the square on the bottom is commutative 
thanks to the property of the strength with respect to $r$. 
Hence the large square is commutative, and the result follows.
\end{proof}

\begin{thm}
\label{thm:monads}
Let $C_0$ be a Cartesian category with a strong monad $(M,\mu,\eta,t)$
and with a consistency relation $\cons$ on $C\subseteqq K$. 
Then $C_0$ with the left and right Kleisli products 
is a weak Cartesian effect category if and only if 
for all $Y_1,Y_2$ (with the projections $q_1:Y_1\times Y_2\to Y_1$ 
and $q'_1:Y_1\times MY_2 \to Y_1$):
  $$ q_1 \circ \lk{t_{Y_1,Y_2}} \cons q'_1 \mbox{ in } K \mysep 
  \;\mbox{ or equivalently }\; 
  Mq_1 \circ t_{Y_1,Y_2} \cons_0 \eta_{Y_1} \circ q'_1 \mbox{ in } C_0  \;.$$
If in addition $\forall \ff,\ff':X\to M(Y_1\times Y_2) $ in $C_0\mysep$
$$ 
\formula{ (Mq_1\circ \ff \lrcons_0 Mq_1\circ \ff') \wedge (Mq_2\circ \ff = Mq_2\circ \ff')
\implies \ff=\ff' } \mbox{ in } C_0  \;, $$
then $C_0$ with the left and right Kleisli products 
is a Cartesian effect category.
\end{thm}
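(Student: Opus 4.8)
The plan is to keep $(C\subseteqq K,\cons)$ (with $K=K_M$, $C=C_M$) fixed as an effect category over the Cartesian category $C$, so that the only thing left to establish is whether the Kleisli products $\ltimeskl,\rtimeskl$ of definition~\ref{defn:monads-prod} verify the semi-pure product properties of definition~\ref{defn:semipure}. The equivalence of the two displayed forms of the condition is immediate from the dictionary of section~\ref{subsec:monads}: $q_1\circ\lk{t_{Y_1,Y_2}}$ stands for $Mq_1\circ t_{Y_1,Y_2}$ in $C_0$ while the pure projection $q'_1$ stands for $\eta_{Y_1}\circ q'_1$, so by definition of $\cons_0$ the two statements coincide. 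For the forward implication I would instantiate the left semi-pure product property of $\ltimeskl$ at $v_1=\id_{Y_1}:Y_1\rpto Y_1$ and $f_2=\lk{\id_{MY_2}}:MY_2\to Y_2$: by the first equation of lemma~\ref{lem:strength} this product equals $\lk{t_{Y_1,Y_2}}$, and the consistency half of the property then reads $q_1\circ\lk{t_{Y_1,Y_2}}\cons\id_{Y_1}\circ p_1=q'_1$, which is exactly the asserted condition.

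For the converse I would first record the factorization
$$ v\ltimeskl f = \lk{t_{Y_1,Y_2}}\circ(v\times J(\kl{f})) \quad\mbox{in }K $$
for every pure $v=J(v_0):X_1\rpto Y_1$ and every $f:X_2\to Y_2$, valid because both sides stand for $t_{Y_1,Y_2}\circ(v_0\times\kl{f})$ in $C_0$ (using that $\kl{g\circ u}=\kl{g}\circ u_0$ whenever $u$ is pure, as in remark~\ref{rem:kl-pure}, and that $\times$ on $C$ is $J$ applied to $\times$ on $C_0$); in particular $v\times J(\kl{f}):X_1\times X_2\rpto Y_1\times MY_2$ is a genuine product of pure morphisms. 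Assuming now $q_1\circ\lk{t_{Y_1,Y_2}}\cons q'_1$, pure substitution (proposition~\ref{prop:purecons}) along the pure morphism $v\times J(\kl{f})$ gives $q_1\circ(v\ltimeskl f)\cons q'_1\circ(v\times J(\kl{f}))$, and the binary product property of $C$ rewrites the right-hand side as $v\circ p_1$; this is the consistency half of the left semi-pure product property. The equality half follows from the second equation of lemma~\ref{lem:strength}: $q_2\circ\lk{t_{Y_1,Y_2}}$ stands for the $C_0$-projection $q'_2:Y_1\times MY_2\to MY_2$, hence $q_2\circ(v\ltimeskl f)$ stands for $q'_2\circ(v_0\times\kl{f})=\kl{f}\circ p_2=\kl{f\circ p_2}$, i.e.\ $q_2\circ(v\ltimeskl f)=f\circ p_2$. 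I would then check the routine facts that $\ltimeskl$ is a graph homomorphism $C\times K\to K$ extending $\times$, the extension reducing to the strength axiom $t_{Y_1,Y_2}\circ(\id\times\eta_{Y_2})=\eta_{Y_1\times Y_2}$, and treat $\rtimeskl$ by the symmetric argument (the hypothesis, holding for all $Y_1,Y_2$, supplying the corresponding statement for the right strength). Together with the effect category structure, this exhibits $C_0$ with the Kleisli products as a weak Cartesian effect category.

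For the final assertion I would observe that the displayed implication on $\ff,\ff':X\to M(Y_1\times Y_2)$ is precisely the $C_0$-transcription of the unicity condition of remark~\ref{rem:semipure-unicity}: writing $\ff=\kl{h}$, $\ff'=\kl{h'}$, the morphism $Mq_i\circ\ff$ stands for $q_i\circ h$ and $Mq_i\circ\ff'$ for $q_i\circ h'$. Since the converse direction already shows that $\ltimeskl$ is a graph homomorphism extending $\times$ and satisfying the left semi-pure product property, remark~\ref{rem:semipure-unicity} then upgrades it to \emph{the} left semi-pure product, and symmetrically $\rtimeskl$ to the right one; hence the Kleisli products are the unique semi-pure products and $C_0$ with them is a Cartesian effect category.

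I expect the bulk of the work to be bookkeeping --- tracking which morphisms live in $K_M$ versus $C_0$, inserting the unit $\eta$ in the right places when transporting pure morphisms across the translation, and pinning down the typing of $q'_1,q'_2$ and of the two equations of lemma~\ref{lem:strength}. The single non-mechanical ingredient is spotting the factorization $v\ltimeskl f=\lk{t_{Y_1,Y_2}}\circ(v\times J(\kl{f}))$, for it is exactly what lets the consistency of the strength with the identity propagate, by pure substitution, into the consistency required of an arbitrary left semi-pure product.
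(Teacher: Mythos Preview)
Your proposal is correct and follows essentially the same route as the paper. The only cosmetic difference is that you carry out the converse direction in $K$ via the factorization $v\ltimeskl f=\lk{t_{Y_1,Y_2}}\circ(v\times J(\kl{f}))$ and pure substitution (proposition~\ref{prop:purecons}), whereas the paper does the equivalent computation in $C_0$ using the substitution property of $\cons_0$; you also spell out the extension of $\times$ via the strength axiom $t\circ(\id\times\eta)=\eta$ and the symmetric treatment of $\rtimeskl$, which the paper leaves implicit.
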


Roughly speaking (i.e., forgetting the projections), this means that
$C_0$ with the Kleisli products is a weak Cartesian effect category if
and only if:
\slogan{the strength of the monad is consistent with the identity.} 

\begin{proof}
Let us consider the morphism $\lk{t_{Y_1,Y_2}}$.
By the first part of lemma~\ref{lem:strength} 
$\lk{t_{Y_1,Y_2}} = \id_{Y_1} \ltimeskl \,\lk{\id_{MY_2}}$.
Therefore, if the left Kleisli product does satisfy the left semi-pure product property,
then $q_1 \circ \lk{t_{Y_1,Y_2}} \cons q'_1$. 
Now, let us assume that $q_1 \circ \lk{t_{Y_1,Y_2}} \cons q'_1$;
this is illustrated below, together with $q_2 \circ \lk{t_{Y_1,Y_2}} = q'_2$ 
(second part of lemma~\ref{lem:strength}), first in $K$ then in $C_0$:
$$ \xymatrix@C=4pc@R=1.5pc{ 
  Y_1 \ar@{~>}[r]^{\id}  & Y_1 \\
  Y_1\times MY_2 \ar@{~>}[u]^{q'_1} \ar@{~>}[d]_{q'_2} \ar[r]^{\lk{t}} 
    \ar@{}[rd]|{=} \ar@{}[ru]|{\consdown} & 
    Y_1\times Y_2 \ar@{~>}[u]_{q_1} \ar@{~>}[d]^{q_2} \\
  MY_2 \ar[r]^{\lk{\id}} & Y_2 \\
  }  
\qquad 
 \xymatrix@C=4pc@R=1.5pc{ 
  Y_1 \ar[r]^{\eta}  & MY_1 \\
  Y_1\times MY_2 \ar[u]^{q'_1} \ar[d]_{q'_2} \ar[r]^{t} 
    \ar@{}[rd]|{=} \ar@{}[ru]|{\consdown_0} & 
    M(Y_1\times Y_2) \ar[u]_{Mq_1} \ar[d]^{Mq_2} \\
  MY_2 \ar[r]^{\id} & MY_2 \\
  }  
$$
For any $v:X_1\rpto Y_1$ and $f:X_2\to Y_2$, the morphism $v\ltimeskl f$ in $K$ 
is defined by $\kl{v\ltimeskl f} = t_{Y_1,Y_2} \circ (v_0\times\kl{f})$ in $C_0$. 
In the diagram below, in $C_0$, 
the left-hand side illustrates the binary product property of $v_0\times\kl{f}$
and the right-hand side is as above. 
$$\xymatrix@C=4pc@R=1.5pc{ 
  X_1 \ar[r]^{v_0} \ar@/^4ex/[rr]^{\kl{v}}_{=} & Y_1 \ar[r]^{\eta}  & MY_1 \\
  X_1\times X_2 \ar[u]^{p_1} \ar[d]_{p_2} \ar[r]^{v_0\times\kl{f}}
    \ar@{}[rd]|{=} \ar@{}[ru]|{=} & 
    Y_1\times MY_2 \ar[u]^{q'_1} \ar[d]_{q'_2} \ar[r]^{t} 
      \ar@{}[rd]|{=} \ar@{}[ru]|{\consdown_0} & 
    M(Y_1\times Y_2) \ar[u]_{Mq_1} \ar[d]^{Mq_2} \\
  X_2 \ar[r]^{\kl{f}} \ar@/_4ex/[rr]_{\kl{f}}^{=} & MY_2 \ar[r]^{\id} & MY_2 \\
  }  
$$
It follows immediately from the bottom part of this diagram that  
$Mq_2 \circ \kl{v\ltimeskl f} = \kl{f}\circ p_2$,
which means that
$ q_2\circ (v\ltimeskl f) = f\circ p_2 \mbox{ in } K$.
Moreover, it follows from the top part, using the substitution property of $\cons_0$, that 
$Mq_1 \circ \kl{v\ltimeskl f} \cons_0 \kl{v} \circ p_1$,
which means that
$ q_1\circ (v\ltimeskl f) \cons v\circ p_1  \mbox{ in } K $.
The left semi-pure product property is hence satisfied by $\ltimeskl$.

Then the last part of the theorem follows immediately 
from remark~\ref{rem:semipure-unicity}.
\end{proof}

\subsection{More examples of Cartesian effect categories} 
\label{subsec:monad-exam}

In this section we consider the effect categories in section~\ref{subsec:effect-exam} 
which are defined from a strong monad.
In each example the strength is described, 
then it is easy to check that the conditions of theorem~\ref{thm:monads}
are satisfied, so that the Kleisli category gives rise to
a cartesian effect category with the Kleisli products as semi-pure products. 
However, for the monads of lists and of finite (multi)sets,
the extended consistency relation is so weak that the 
sequential product properties (definition~\ref{defn:sequential-direct})
are not sufficient for characterizing the sequential products.

\exam{Errors} 
\label{ex:monad-error}
The strength $t_{X_1,X_2}$ is obtained by composing the isomorphism 
$X_1\times(X_2+E) \cong (X_1\times X_2)+(X_1\times E)$
with $\id_{X_1\times X_2} + \sigma_{X_1}: 
(X_1\times X_2)+(X_1\times E) \to (X_1\times X_2)+E$, 
where $\sigma_{X_1}$ is the projection.
The Kleisli products are semi-pure products from section~\ref{subsec:cartesian-exam}. 

\exam{Lists} 
The strength is such that 
for all $x_1\in X_1$ and $\ulx_2=(x_{2,1},\dots,x_{2,k}) \in\cL(X_2)$,
$\formula{ t_{X_1,X_2}(x_1,\ulx_2) = (\tuple{x_1,x_{2,1}},\dots,\tuple{x_1,x_{2,k}} ) }$.
It follows that $Mp_1 \circ t_{X_1,X_2}(x_1,\ulx_2) = (x_1)^k$ 
while $\eta_{X_1} \circ p'_1(x_1,\ulx_2) = (x_1)$.
So, the left sequential product is:
$$ \forall x_1\in X_1 \mysep \forall x_2\in X_2 \mysep 
\formula{ (f_1\ltimes f_2) (x_1,x_2) =  
(\tuple{y_1,z_1},\dots,\tuple{y_1,z_p},\dots,\tuple{y_n,z_1},\dots,\tuple{y_n,z_p}) } \;,$$
where $f_1(x_1)=(y_1,\dots,y_n)$ and $f_2(x_2)=(z_1,\dots,z_p)$, 
so that there are non-central morphisms. 

\exam{Finite (multi)sets} 
Finite multisets and finite sets have similar properties. 
For sets, the strength is such that 
for all $x_1\in X_1$ and $\ulx_2 \in\cP(X_2)$,  
$\formula{ t_{X_1,X_2}(x_1,\ulx_2) = \{ \tuple{x_1,x'} \mid x'\in \ulx_2 \} }$,
and both the left and the right sequential product are:
$$ \forall x_1\in X_1 \!\mysep\! \forall x_2\in X_2 \!\mysep\! 
\formula{  (f_1\!\ltimes\! f_2) (x_1,x_2) \!=\!   (f_1\!\rtimes\! f_2) (x_1,x_2) \!=\!  
\{ \tuple{y,z} \mid y\in f_1(x_1) \wedge z\in f_2(x_2) \}} \,.$$

\section{Conclusion}

This paper deals with the major issue of 
formalizing computational effects, 
especially while using multivariate functions. 
For this purpose, we have introduced several new features: 
first a \emph{consistency} relation
and the associated notion of \emph{effect category}, 
then the  \emph{semi-pure} and \emph{sequential} products 
for getting a \emph{Cartesian effect category}.
Thanks to the universal property of the semi-pure products,
each Cartesian effect category is endowed with 
a powerful tool for definitions and proofs. 
This has been used for proving that every Cartesian effect category 
is a Freyd-category 
and for giving conditions which 
ensure that a strong monad gives rise to a Cartesian effect category.
We have studied several examples of effects, 
in each case we get a Cartesian effect category.


Since the notions of effect category and Cartesian effect category are new,
there is still a large amount of work to do in order to study 
their applications and their limitations.
For instance, in order to define some kind of closure, 
one could try to generalize the results of \citep{CurObt89} on
partiality to other effects. 
Further investigations include:
enhancing the comparison with \citep{Moggi95} in order to clarify the
relations between Cartesian effect categories and evaluation logic;
fitting more examples in our framework (e.g. continuations).
In addition, the issue of combining effects, as in \citep{HylandEtal06}, 
might be revisited from the point of view of effect categories.


\subsubsection*{Acknowledgments}
The authors would like to thank Eugenio Moggi for pointing out 
the papers \citep{CurObt89} and \citep{Moggi95}.

\end{document}